\theoremstyle{plain}
\newtheorem{theorem}{Theorem}[section]
\newtheorem{prop}[theorem]{Proposition}
\theoremstyle{definition}
\theoremstyle{remark}
\newtheorem{remark}[theorem]{Remark}
\numberwithin{equation}{section}
\newcommand{\La}{\Lambda}
\newcommand{\R}{\mathbb{R}}
\newcommand{\Rc}{\ensuremath{\mathcal{R}}}
\newcommand{\C}{\mathbb{C}}
\newcommand{\E}{\mathcal{E}}
\newcommand{\<}[2]{\ensuremath{\langle #1,\,#2\rangle}}
\def\paragraph{\@startsection{paragraph}{4}%
  \z@\z@{-\fontdimen2\font}%
  {\normalfont\bfseries}}
\def\subsubsection{\@startsection{subsubsection}{3}%
  \z@{.5\linespacing\@plus.7\linespacing}{-.5em}%
  {\normalfont\bfseries}}
\newcommand{\ar}{a_2}
\begin{document}
\title{A stochastic framework for atomistic fracture}
\author{Maciej Buze}
\address{School of Mathematics, Cardiff University, Senghennyddd Road, CF24 4AG, UK}
\email{BuzeM@cardff.ac.uk}
\author{Thomas E. Woolley}
\address{School of Mathematics, Cardiff University, Senghennyddd Road, CF24 4AG, UK}
\email{WoolleyT1@cardff.ac.uk}
\author{L. Angela Mihai}
\address{School of Mathematics, Cardiff University, Senghennyddd Road, CF24 4AG, UK}
\email{MihaiLA@cardff.ac.uk}
\subjclass[2010]{74R10, 74S60, 74G15}

\date{\today}

\keywords{elasticity, interatomic potential, stress, fracture, stochastic modeling, numerical algorithms}

\begin{abstract}
We present a stochastic modeling framework for atomistic propagation of a Mode I surface crack, with atoms interacting according to the Lennard-Jones interatomic potential at zero temperature. Specifically, we invoke the Cauchy-Born rule and the maximum entropy principle to infer probability distributions for the parameters of the interatomic potential. We then study how uncertainties in the parameters propagate to the quantities of interest relevant to crack propagation, namely, the critical stress intensity factor and the lattice trapping range. For our numerical investigation, we rely on an automated version of the so-called numerical-continuation enhanced flexible boundary (NCFlex) algorithm.
\end{abstract}

\maketitle


\section{Introduction}
Brittle fracture in crystalline materials is an inherently multiscale phenomenon where macroscopic crack propagation is determined by atomistic processes occurring at the crack tip \cite{Bitzek2015}. The usual modeling approach consists in coupling a high-level, high-accuracy atomistic model, employed in the vicinity of the crack tip, with a continuum-modeled far field. The atomistic model should ideally exhibit a quantum level of accuracy, which can be achieved, for instance, with a Density Functional Theory type model \cite{Kermode2008}. A more computationally feasible alternative is to disregard the electrons and model the inter-atomic interactions instead. In this framework, atoms are treated as points in a discrete model and the behavior of atoms is governed by an empirical (but physics-based) interatomic potential. At this level of description, the two quantities that capture the propagation of a straight crack of single mode are the \emph{critical stress intensity factor} $K_{\rm c}$, and, reflecting the discreteness of the lattice, the \emph{lattice trapping range}, $(K_-,K_+) \subset \R$, which was first identified in \cite{Thomson_1971}.

A typical empirical potential has between $2$ and $11$ parameters (rising to $> 1000$ for modern machine-learning potentials), and the highly nonlinear nature of the overall model necessitates quantifying the uncertainty in their choice and how this propagates to the quantities of interest. In the literature, this is usually done by employing a Bayesian framework, in which one assumes some prior probability distributions for the parameters, which are subsequently updated using available datasets originating from experiments or higher-level theories \cite{Frederiksen2004,Longbottom2019,Wen2020}. However, two main issues can potentially arise with this approach. Firstly, the prior distribution of each parameter is typically taken to be a Gaussian (e.g., in \cite{Frederiksen2004,Longbottom2019,Wen2020}), due to the seemingly reasonable assumption that errors in the reference dataset are independent. This may not necessarily hold true, depending on the physical constraints present in the model, such as, for example, the fact that some parameters have positive values. Secondly, while the Bayesian procedure can be carried out reasonably well for simple quantities of interest, including elastic moduli, lattice parameters, cohesion energy, and point-defect formation energy, it presents a computational challenge for more complicated quantities, such as $(K_-,K_+)$, for which analytical formulae do not exist, and have to be estimated numerically.

Addressing the first issue and inspired by the recent literature on the topic of continuum stochastic elasticity, 
\cite{Soize2006,guill_soize1,Soize2017,GS2020,G2020}, we propose an information-theoretic approach to derive prior distributions for the parameters of the interatomic potential. Our approach uses a minimal set of physical constraints obtained by coupling the atomistic model via the Cauchy-Born rule with its continuum counterpart. To address the computational challenge, we formalize an automated numerical procedure that allows a reasonably fast computation of $K_{\rm c}$ and $(K_-,K_+)$. This is based on a recently proposed NCFlex (numerical continuation-enhanced flexible boundary) scheme \cite{BK2021}.

We demonstrate our approach for an idealized model of straight Mode I fracture in a two-dimensional (2D) crystalline material forming a triangular lattice. In particular, we focus on the Lennard-Jones potential \cite{LJ1924}, which has two parameters, one representing the energetic cost of breaking a bond and the other specifying how difficult it is to break a bond. The relative simplicity of the model ensures that the stochastic framework can be presented with clarity and  allows us to compare our numerical results with some analytical results available in this case. In particular, we show that, in our model, the relative strength of lattice trapping is small and does not depend on the choice of parameters. We further provide evidence that the continuum-theory based formula for $K_{\rm c}$ does hold for our model. We also highlight the interplay between the strength of statistical fluctuations and lattice trapping. In Section~\ref{sec:pre}, we discuss our prerequisites, summarizing the classical continuum framework of linearized elasticity (CLE), and outline the information-theoretic approach, together with a brief account of how this framework translates to fracture modeling. In Section~\ref{sec:deter-atom}, we present the deterministic atomistic model, which uses CLE as a far-field boundary condition. Section~\ref{sec:stoch_atom} is devoted to the development of our atomistic stochastic framework, and is followed by the numerical investigation in Section~\ref{sec:comp}. Some detailed calculations are deferred to Appendix~\ref{appendA}. 

\section{Prerequisites}\label{sec:pre}

\subsection{Classical linearized elasticity}\label{sec:CLE} 

We consider elastic deformations, ${\bm{y}\,\colon\,\Omega \to\mathbb{R}^3}$, of a three-dimensional body, $\Omega \subset\mathbb{R}^3$, of the form $\bm{y}(\bm{x}) = \bm{x} + \bm{u}(\bm{x})$, where $\bm{u}\,\colon\,\Omega \to \R^3$ is the displacement field. The \emph{strain tensor}, $\bm{\varepsilon}\,\colon\,\Omega \to \R^{3\times 3}$, is defined by ${\bm{\varepsilon}(\bm{x}) = \left[\nabla\bm{u}(\bm{x}) + \nabla\bm{u}(\bm{x})^\top\right]/2}$, and the linear elastic constitutive stress-strain relation \cite{landau1989theory} takes the form
\begin{equation}\label{eq:stress-strain}
\bm{\sigma}(\bm{x}) = \mathbb{C} : \bm{\varepsilon}(\bm{x}),
\end{equation}
where $\mathbb{C} \in\mathbb{R}^{3\times 3 \times 3 \times 3}$ is a constant fourth-order tensor, known as the \emph{elasticity tensor}, and ${\bm{\sigma}\,\colon\,\Omega \to \mathbb{R}^{3\times 3}}$ is the \emph{stress tensor}. In the absence of body forces, equilibrium configurations can be found by solving the equations
\begin{equation}\label{equil-eqn}
\sigma_{ij,j} = 0\,\text{ for }i=1,2,3,
\end{equation}
where the Einstein summation convention is used, subject to appropriate boundary conditions. Depending on the symmetry class considered, the elasticity tensor $\mathbb{C}$ has up to 21 independent entries \cite{landau1989theory}, known as \emph{elasticities}, and admits a second-order tensor representation \cite{Mehrabadi:1990:MC}, in the form of a symmetric matrix $[\mathbb{C}] \in \mathbb{R}^{6\times 6}$. Then the relation \eqref{eq:stress-strain} can be equivalently restated as
\begin{equation}\label{C-matrix}
\begin{bmatrix}
\sigma_{11} \\ \sigma_{22} \\ \sigma_{33} \\ \sigma_{23} \\ \sigma_{31} \\ \sigma_{12}
\end{bmatrix} = \begin{bmatrix}
\C_{1111} & \C_{1122} & \C_{1133} & \C_{1123} & \C_{1131} & \C_{1112}  \\
\C_{2211} & \C_{2222} & \C_{2233} & \C_{2223} & \C_{2231} & \C_{2212}  \\
\C_{3311} & \C_{3322} & \C_{3333} & \C_{3323} & \C_{3331} & \C_{3312}  \\
\C_{2311} & \C_{2322} & \C_{2333} & \C_{2323} & \C_{2331} & \C_{2312}  \\
\C_{3111} & \C_{3122} & \C_{3133} & \C_{3123} & \C_{3131} & \C_{3112}  \\
\C_{1211} & \C_{1222} & \C_{1233} & \C_{1223} & \C_{1231} & \C_{1212}
\end{bmatrix}  
\begin{bmatrix}
\varepsilon_{11} \\ \varepsilon_{22} \\ \varepsilon_{33} \\ 2\varepsilon_{23} \\ 2\varepsilon_{31} \\ 2\varepsilon_{12}
\end{bmatrix}.
\end{equation}
In each symmetry class $[\mathbb{C}]$ can be decomposed as follows,
\begin{equation}\label{C-decomp}
[\mathbb{C}] = \sum_{i=1}^n c_i \bm{E_i},
\end{equation}
where $n$ ranges from $2$ for the fully isotropic case to $21$ when a fully anisotropic case is considered. Here, $\{\bm{E_i}\}$ is the corresponding basis of a subspace of $\R^{6\times 6}$ (see, e.g., \cite{guill_soize1}) and 
\begin{equation}\label{c1cn}
\bm{c} = (c_1,\dots,c_n) \in \R^n
\end{equation}
 is the set of independent entries of $\mathbb{C}$. 
 
\subsection{An information-theoretic approach for continuum elastic materials}\label{sec:stoch-cont}

To account for the inherent uncertainty in elastic material parameters, the elasticity tensor can be modeled as a random variable. At the continuum level, the randomness typically stems from: (i) the presence of uncertainties while modeling the experimental setup in either forward simulations or inverse identification; (ii) the lack of scale separation for heterogeneous random materials, hence resulting in the consideration of mesoscopic apparent properties. In \cite{guill_soize1}, a least-informative stochastic modeling framework for elastic material is developed by invoking the maximum entropy principle (MaxEnt) \cite{Jaynes1957}. The minimal set of constraints to explicitly construct a MaxEnt prior probability distribution for $\mathbb{C}$ is as follows:
\begin{enumerate}[label=(\emph{P\arabic*})]
\item The mean value of the tensor is known; 
\item The elasticity tensor $\mathbb{C}$, as well as its inverse, known as the \emph{compliance tensor}, both have a finite second-order moment (physical consistency).
\end{enumerate}
A more detailed treatment of this class of approaches can be found in \cite{Soize2006,Soize2017,GS2020,G2020}.

Given the decomposition of $[\mathbb{C}]$ in \eqref{C-decomp}, the object of interest is a $\mathbb{R}^n$-valued random variable $\bm{c}$, and the constraints \emph{(P1)-(P2)} (together with the required normalization) take the form of a mathematical expectation
\begin{equation}\label{E-f}
\mathbb{E}\{\bm{f}(\bm{c})\} = \bm{h},
\end{equation}
where $\bm{f}\,\colon\,\R^n \to \R^q$ and $\bm{h} \in \R^q$. It can be shown \cite{Jaynes1957,Mehta2004} that the MaxEnt probability distribution of the random variable $\bm{c}$ from \eqref{c1cn} is characterized by the probability density function
\begin{equation}\label{rho-general}
\rho(\mathfrak{c}) := \bm{1}_{S}(\mathfrak{c}) \exp\{- \<{\bm{\lambda}}{\bm{f}(\mathfrak{c})}_{\mathbb{R}^q}\},
\end{equation}
where the set $S \subset \mathbb{R}^n$ represents all possible choices of $\bm{c} \in \mathbb{R}^n$ for which \eqref{E-f} is satisfied, whereas $\bm{\lambda} = (\lambda_1,\dots,\lambda_q)$ is the vector of the associated Lagrange multipliers. We refer to \cite{guill_soize1} for an in-depth discussion, and in particular, to their Appendix~B, in which the existence and uniqueness of a MaxEnt probability density function is addressed,  explaining why this takes the form \eqref{rho-general}.

\subsection{Mode I fracture in planar elasticity}\label{sec:cle-in-plane}

Our stochastic framework for atomistic crack propagation will be presented for the case of a single Mode I crack in a cubic crystal modeled in the in-plane approximation (see Figure~\ref{fig:domain_cont}). This leads to considerable simplification of the general theory presented in Section~\ref{sec:CLE}. 

\begin{figure}[htbp]
	\includegraphics[width=0.95\textwidth]{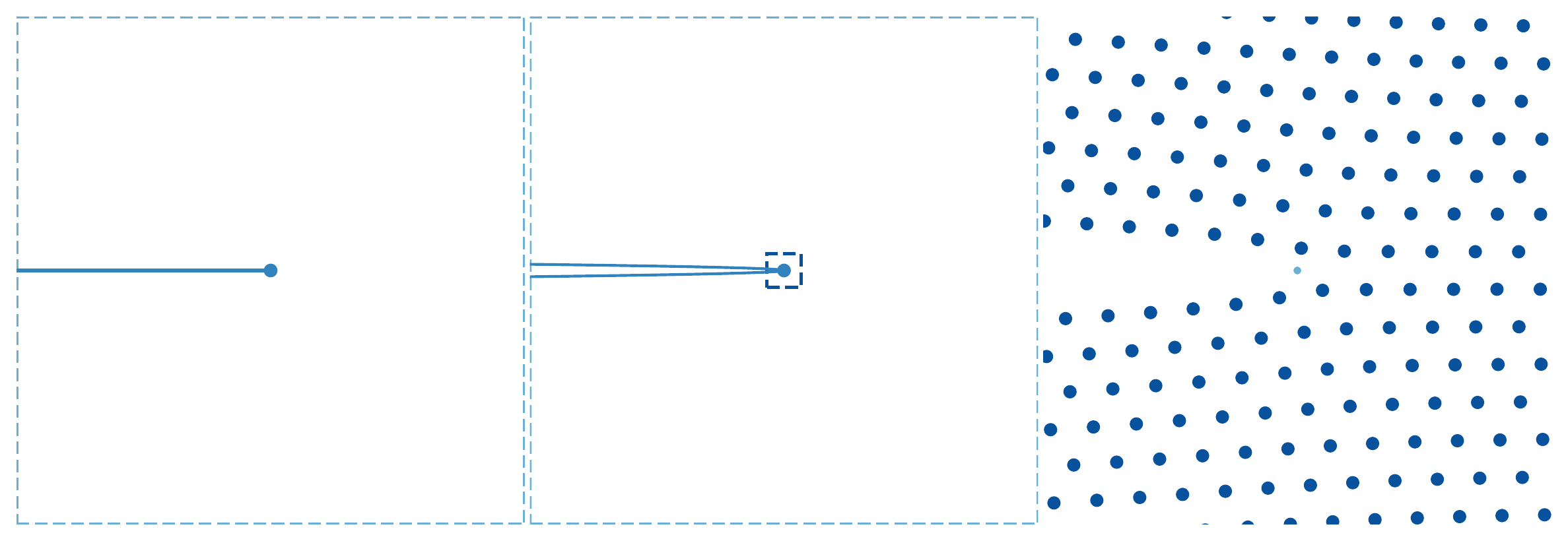}
	\caption{Left: A continuum $\mathbb{R}^2 \setminus \Gamma_0$ in the reference configuration. Middle: The configuration of a cracked body obtained from the equilibrium displacement field $\bm{u}$ from \eqref{u_CLE}, with the vicinity of the crack tip (in which atomistic effects dominate) highlighted. Right: The atomistic structure in the vicinity of the crack tip.}
	\label{fig:domain_cont}
\end{figure}

In \emph{planar elasticity} \cite{Thorpe1992} (see \cite[Appendix]{Ostoja2002} for a discussion on the plane-strain and plane-stress reductions of the three-dimensional elasticity theory),  the strain components are $\varepsilon_{11}, \varepsilon_{22}, \varepsilon_{12}$ and the stress components are $\sigma_{11},\sigma_{22}, \sigma_{12}$. In combination with the fact that in the cubic symmetry class $[\mathbb{C}]$ has three independent entries $\bm{c} = (c_1,c_2,c_3)$, equation \eqref{C-matrix} simplifies to
 \[
\bm{\sigma}_{\rm 2D} = \mathbb{C}_{\rm 2D}:\bm{\varepsilon} \iff \begin{bmatrix}
\sigma_{11} \\ \sigma_{22} \\ \sigma_{12}
\end{bmatrix} = \begin{bmatrix}
c_1 & c_2 & 0  \\
c_2 & c_1 & 0  \\
0 & 0 & c_3
\end{bmatrix}  
\begin{bmatrix}
\varepsilon_{11} \\ \varepsilon_{22} \\ 2\varepsilon_{12}
\end{bmatrix}.
\]
As will become apparent when we introduce the atomistic setup in Section \ref{sec:deter-atom}, we consider in fact a special case, such that
\begin{equation}\label{c1c2c3}
c_2 = c_3\quad\text{ and }\quad c_3 = \frac{c_1-c_2}{2} \implies c_2 = \frac{1}{3}c_1 = \mu,
\end{equation}
where $\mu$ denotes the \emph{shear modulus} and represents the only independent entry of the elasticity tensor.  In this case, an equilibrium displacement field, $\bm{u}\,\colon\,\R^2\setminus\Gamma_0 \to \R^2$, around a Mode I crack, with the crack surface described by
\[
\Gamma_0 = \{\bm{x}=(x_1,x_2) \in \R^2\,\mid\, x_1 < 0\;\text{ and }\; x_2 = 0\},
\]
and which satisfies the equilibrium equations \eqref{equil-eqn} subject to homogeneous Neumann boundary condition on $\Gamma_0$, can be shown \cite{SJ12} to be given by
\begin{equation}\label{u_CLE}
K\widehat{\bm{u}}(\bm{m})=  \frac{K}{4\sqrt{2\pi}\mu}\sqrt{r} \begin{bmatrix} 3\cos(\theta/2) - \cos(3\theta/2) \\ 5\sin(\theta/2) - \sin(3\theta/2) \end{bmatrix},\qquad \theta\in(-\pi,\pi),
\end{equation}
where we employ polar coordinates $\bm{m} = r(\cos\theta,\sin\theta)$ and $K \in \R$ is the \emph{stress intensity factor} and enters as a prefactor.

According to \emph{Griffith's criterion} \cite{SJ12}, at the continuum level of description, there exists a critical $K_{\rm c}$, so that, when $K > K_{\rm c}$, it is energetically favorable for the crack to propagate. It can be shown \cite{Zehnder2012} that, in the case considered, the critical value is
\begin{equation}\label{Kc}
\widetilde{K}_{\rm cont} = 4\sqrt{\frac{\gamma\mu}{3}},
\end{equation}
where $\gamma$ is the surface energy per unit area, which is a material-dependent quantity. 

It is well-known \cite{Bitzek2015}, however, that the continuum picture is incomplete in the case of brittle fracture in crystalline materials, and one should not omit atomistic effects occurring at the crack tip. We will proceed to present the atomistic framework.

\section{Deterministic atomistic setup}\label{sec:deter-atom}

In this section, we introduce the atomistic setup by recalling well-established arguments setting out why the continuum picture is insufficient, followed by a detailed discussion on discrete kinematics, Cauchy-Born rule and atomistic fracture. 

\subsection{Lattice trapping}\label{lat-trap}

Cracks in brittle materials are known to propagate via atomistic mechanisms involving breaking of chemical bonds between atoms at the crack tip \cite{Bitzek2015}. In particular, as first reported in \cite{Thomson_1971} and confirmed for a model similar to ours in \cite{Sinclair}, the discreteness of the lattice implies that the crack remains locally stable for a range of stress intensity factors
\begin{equation}\label{Krange}
I = (K_-,K_+) \subset \mathbb{R},
\end{equation}
also known as the \emph{lattice trapping range}. This is in contrast with the continuum theory outlined in Section~\ref{sec:cle-in-plane}. At the atomistic level, the critical
\begin{equation}\label{Kat}
\widetilde{K}_{\rm at} \in I
\end{equation}
corresponds to a unique value for which the atomistic energy is the same both prior and after the crack propagating by one lattice spacing $\ell$. 

\begin{figure}[htbp]
\includegraphics[width=0.9\textwidth]{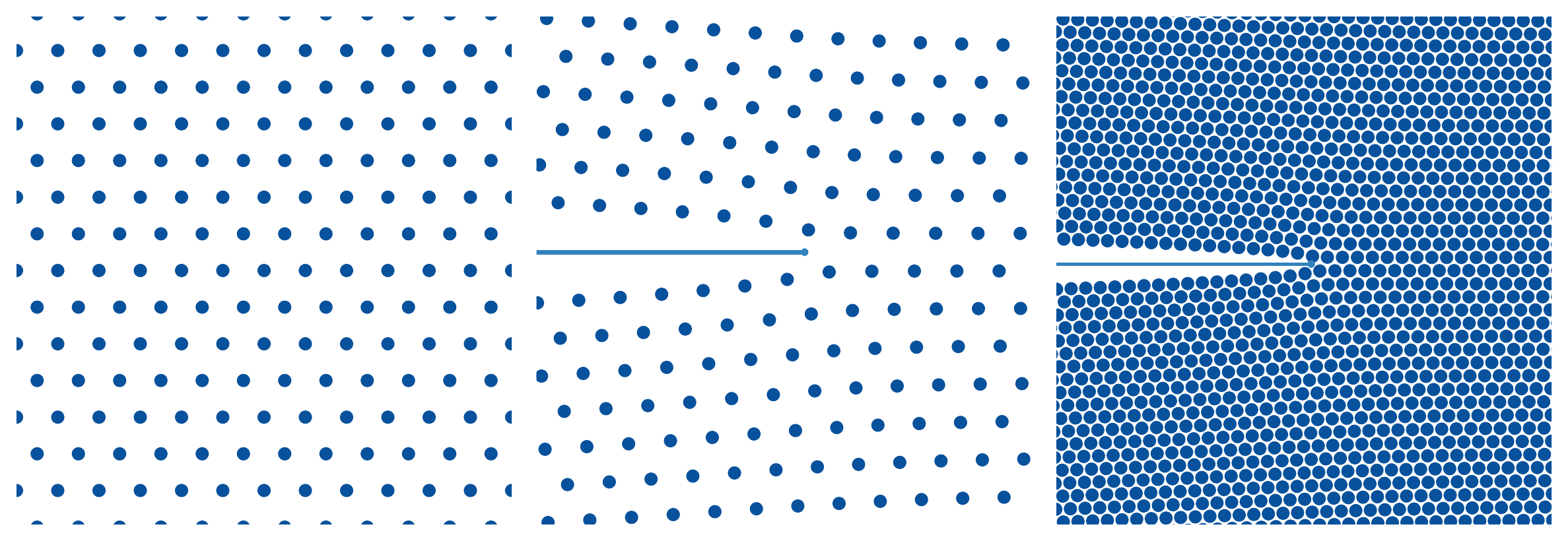}
\caption{Left: A defect-free crystalline material $\bm{\La}$. Middle: The vicinity of a Mode I crack tip with the crack surface depicted as a dotted line. Right: Zoomed-out view of the cracked crystal.}
\label{fig:domain}
\end{figure}

\subsection{Discrete kinematics}\label{setup}

We consider a 2D crystalline material, $\bm{\La}$, given by the infinite triangular lattice (see Figure~\ref{fig:domain}) defined by
\begin{equation}\label{lattice}
\bm{\La}= \ell\left({\bm{M}} \mathbb{Z}^2 - \widehat{\bm{x}}\right),\qquad {\bm{M}} = \begin{bmatrix} 1 & \frac{1}{2} \\ 0 & \frac{\sqrt{3}}{2} \end{bmatrix},\qquad \widehat{\bm{x}} = \begin{bmatrix} \frac{1}{2} \\\frac{\sqrt{3}}{4} \end{bmatrix},
\end{equation}
where $\mathbb{Z}^2 = \{(m_1,m_2) \in \R^2 \mid m_1,m_2 \in \mathbb{Z}\}$ and the prefactor $\ell > 0$ is the so-called lattice constant, describing the natural distance between atoms in the material, and can be measured experimentally. Conceptually, the 2D domain is to be interpreted as a cross-section of a three-dimensional material body, which is periodic in the anti-plane direction. For instance, the triangular lattice is known to be obtained as a projection of the body-centered-cubic lattice \cite[Figure 1]{2017-bcscrew}, which is a crystalline arrangement that can be found in many real-world materials \cite{Kittel1996}. 

The atoms are assumed to interact within a finite interaction range $\Rc \subset \bm{\La} \setminus \{\bm{0}\}$, which is assumed to respect lattice symmetries, enforced through defining 
\begin{equation}\label{Rc}
\Rc = (\bm{\La} \setminus \{\bm{0}\}) \cap B_{\ell R_*},
\end{equation} 
for some $R_* > 0$, where $B_R$ is the ball of radius $R$ centred at the origin (see Figure \ref{fig:interactions}). The rescaling by the lattice constant $\ell$ ensures that $R_*$ is an independent parameter, in the sense that $R_*$ uniquely determines the number of atoms in the interaction radius, regardless of the lattice constant.

As noted in Section \ref{sec:cle-in-plane}, we are interested in the in-plane deformations of the material, described by a function $\bm{y} \,:\,\bm{\La} \to \mathbb{R}^2$, and we will use the notation $\bm{y}^{\bm{U}}$ for
\[
\bm{y}^{\bm{U}}(\bm{m}) = \bm{m} + \bm{U}(\bm{m}),
\] 
where ${\bm{U} \,:\,\La \to \mathbb{R}^2}$ is the displacement. For any $\bm{m} \in\bm{\La}$ and $\bm{\rho} \in {\Rc}$, the finite difference of the deformation at sites $\bm{m}$ and $\bm{m}+\bm{\rho}$ is defined as $D_{\bm{\rho}}\bm{y}(\bm{m})= \bm{y}(\bm{m}+\bm{\rho})-\bm{y}(\bm{m})$. The discrete gradient is then
\[
\bm{D}\bm{y}(\bm{m}):= \left(D_{\bm{\rho}}\bm{y}(\bm{m})\right)_{\bm{\rho}\in\Rc} \in \left(\R^{2}\right)^{{\Rc}},
\]
where the convenient ordering short-hand notation $\left(\R^{2}\right)^{{\Rc}}$ refers to the space $\mathbb{R}^{2\times |{\Rc}|}$ where ${|{\Rc}| \in \mathbb{N}}$ is the number of elements in ${\Rc}$. For the identity deformation $\bm{y}^{\bm{0}}(\bm{m}) = \bm{m}$, note that \linebreak ${\bm{D}\bm{y}^{\bm{0}}(\bm{m}) = (\bm{\rho})_{\bm{\rho} \in \Rc}}$ and so we will sometimes use the notation $(\bm{\rho})$.

The interaction between atoms is encoded in an interatomic potential $V\,:\, \left(\R^2\right)^{{\Rc}} \to \mathbb{R}$ with a site energy given by ${V}(\bm{D}\bm{y}(\bm{m}))$. In the present study, $V$ is restricted to be a pair potential admitting a decomposition of the form 
\begin{equation}\label{V}
{V}(\bm{D}\bm{y}(\bm{m})) = \sum_{\bm{\rho} \in {\Rc}} \phi(|D_{\bm{\rho}}\bm{y}(\bm{m})|),
\end{equation}
and $\phi\,:\,\R \to \R$ is the Lennard-Jones  potential \cite{LJ1924} given by 
\begin{equation}\label{LJ}
 \phi(r) = 4a_1\left[\left(\frac{1}{\ar r}\right)^{12} - \left(\frac{1}{\ar r}\right)^{6}\right],
\end{equation}
with two parameters $a_1,\ar$. Its typical shape is depicted in Figure~\ref{fig:interactions}. We note that, usually, the second parameter $\ar$ is placed in the numerator of the two power terms. As will be discussed in Remark~\ref{rem1}, in our work, it is more convenient to have it introduced as in \eqref{LJ} instead.

The lattice constant $\ell$, from \eqref{lattice}, can be shown to be uniquely determined by $\ar$ and $R_*$ in our model, that is
\begin{equation}\label{lat-const-dep}
\ell(\ar,R_*) = \left(\frac{B_{R_*}}{A_{R_*}}\right)^{1/6} \ar^{-1},
\end{equation}
where the constants $A_{R_*},\,B_{R_*}$ depend on how many neighbors there are in the interaction range. For instance, if we only look at nearest neighbors, i.e., $R_* = 1$, then $A_1 = 1$ and $B_1 = 2$. If $R_* = \sqrt{3}$ (next-to-nearest neighbors also included), then $A_{\sqrt{3}} = 656/27$ and $B_{\sqrt{3}} = 35008/729$. The relevant calculations are outlined in Appendix~\ref{appendA}. 

The resulting energy of the system is formally given by 
\begin{equation}\label{energy-formal}
\mathcal{E}(\bm{U}) = \sum_{\bm{m} \in \bm{\La}} V(\bm{D}\bm{y}^{\bm{U}}(\bm{m})).
\end{equation}

\begin{figure}[htbp]
\includegraphics[width=0.9\textwidth]{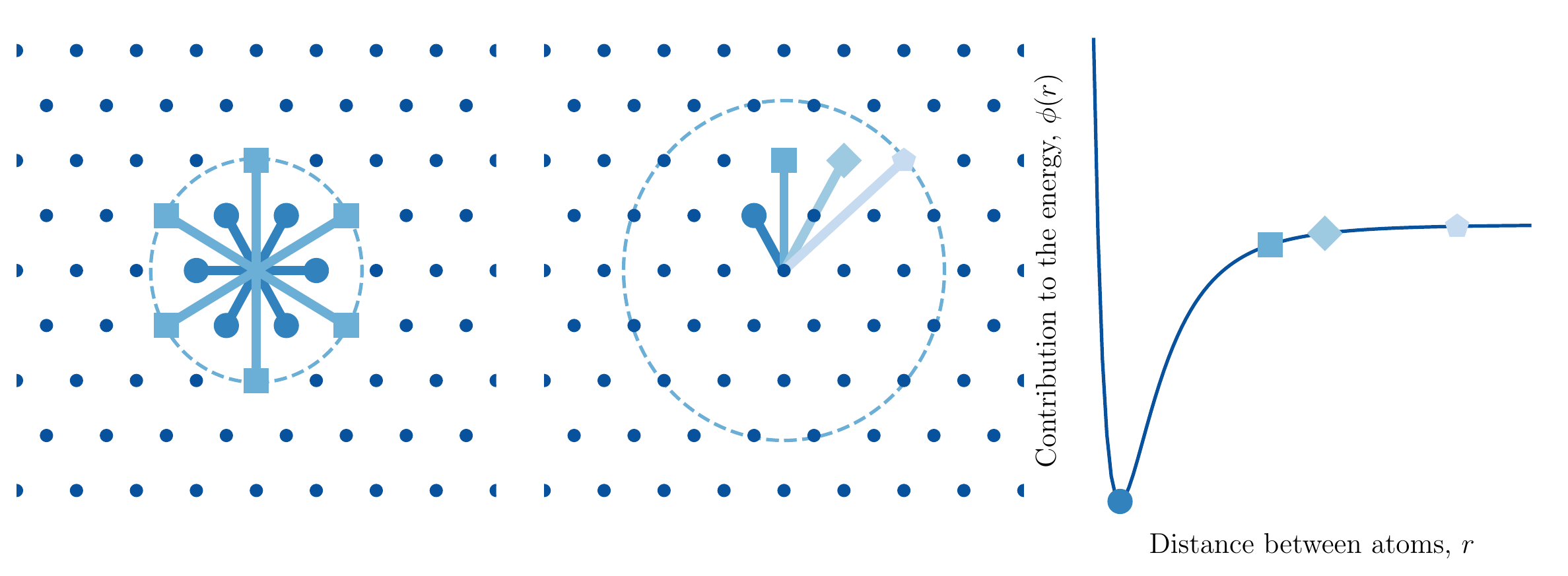}
\caption{Left: The interaction range $\Rc$ from \eqref{Rc} for $R_* = \sqrt{3}$ depicted as a dotted line. All first and second neighbor interaction bonds highlighted. Middle: The interaction range $\Rc$ from \eqref{Rc} for $R_* = \sqrt{7}$ depicted as a dotted line with examples of first, second, third and fourth neighbours shown. Right: Typical contributions of interaction bonds to the energy for the pair potential $\phi$ defined in \eqref{LJ}. Note that when $R_* \geq \sqrt{3}$, it is not typically true for the first neighbour distance to coincide with the minimum of $\phi$.}
\label{fig:interactions}
\end{figure}

\subsection{Cauchy-Born rule}\label{sec:CB}
As investigated in \cite{E84,friesecketheil,emingstatic,ortnertheil13}, for example (see also the most recent survey article \cite{Ericksen08}), a consistent way to link the atomistic model with its continuum counterpart is through the Cauchy-Born rule. In this framework, the interatomic potential, $V$, the interaction range, $\Rc$, and the lattice, $\bm{\La}$, together give rise to a continuum Cauchy-Born strain energy function ${W\,:\, \R^{2 \times 2} \to \R\cup \{+\infty\}}$ through the coupling 
\begin{equation}\label{W-CB}
W(\bm{F}):= \frac{1}{{\rm\det} (\ell\bm{M})} V\left( (\bm{F}\bm{\rho})_{\bm{\rho} \in {\Rc}} \right),
\end{equation}
where $\bm{F} \in \R^{2 \times 2}$ is the displacement gradient arising from the homogeneous displacement field $\bm{U}(\bm{x}) = \bm{F}\bm{x}$. 

A subsequent expansion of $W$ to second order around the identity yields the elasticity tensor $\C$ with 
\begin{equation}\label{C-general}
\C_{i\alpha j\beta}:= \partial_{F_{i\alpha}F_{j\beta}}W(\mathbbm{1}) = \frac{1}{{\rm\det}(\ell\bm{M})} \sum_{\bm{\rho},\bm{\sigma} \in \Rc} \partial^2_{i\bm{\rho} j \bm{\sigma}} V((\bm{\rho})) \rho_{\alpha} \sigma_{\beta}.
\end{equation}
In the case of a pair potential, it further simplifies to 
\begin{equation}\label{C-pp}
\C_{i\alpha j\beta} = \frac{1}{{\rm\det} (\ell\bm{M})} \sum_{\bm{\rho} \in {\Rc}}\left[\left(\frac{\phi''(|\bm{\rho}|)}{|\bm{\rho}|^2} - \frac{\phi'(|\bm{\rho}|)}{|\bm{\rho}|^3}\right)\rho_i\rho_j + \delta_{ij}\frac{\phi'(|\bm{\rho}|)}{|\bm{\rho}|}\right]\rho_{\alpha}\rho_{\beta}.
\end{equation}
Thus, unlike in the continuum linear elasticity setup, where the elasticities are the independent parameters specifying the material model, here, they are derived quantities, and are in effect nonlinear functions of the potential parameters $a_1,\ar$, introduced in \eqref{LJ}, and in principle also of $\ell$ and $R_*$. A calculation presented in Appendix~\ref{appendA} further shows that \eqref{c1c2c3} is indeed satisfied and the shear modulus $\mu$ is given by 
\begin{equation}\label{mu-a1a2}
\mu = D_{R_*} a_1 \ar^{2},
\end{equation}
where $D_{R_*}$ is a known constant depending on $R_*$. 

\subsection{Mode I atomistic fracture}\label{sec:atom-mode1}
Due to the inherent nonlinearity of the atomistic model, it is not possible to obtain an analytic characterization of atomistic equilibrium configurations around a crack. Away from the crack tip, however, the CLE model outlined in Section~\ref{sec:CLE}, which can be obtained via the Cauchy-Born coupling, as discussed in Section~\ref{sec:CB}, approximates the atomistic model well \cite{2018-antiplanecrack}.

The CLE solution $K\widehat{\bm{u}}$ from \eqref{u_CLE} is thus a suitable far-field boundary condition. We impose this by looking at displacements, 
$\bm{U}\,\colon\,\bm{\La} \to \R^2$, of the form
\[
\bm{U}(\bm{m}) =K\widehat{\bm{u}}(\bm{m} - \bm{\alpha}) + \bm{u}(\bm{m}),
\]
where the near-crack-tip \emph{atomistic correction} $\bm{u}$ is constrained to satisfy 
\begin{equation}\label{Du-Duhat}
|\bm{D}\bm{u}(\bm{m})| \ll |\bm{D}\widehat{\bm{u}}(\bm{m})|\,\text{ when }\,|\bm{m}| \gg 1.
\end{equation} 
This is consistent with the idea presented in the middle panel of Figure \ref{fig:domain_cont}. The horizontal shift is ${\bm{\alpha} = (\alpha,0) \in \R^2}$, where $\alpha \in \R$ is introduced as a variable to be able to track the crack tip position. 

The formally defined infinite lattice energy we wish to equilibrate is given by
\begin{equation}\label{en-infinite}
\E(\bm{u},\alpha,K) = \sum_{\bm{m} \in \bm{\La}} V(\bm{D}\bm{y}^{\bm{U}}(\bm{m})) - V(\bm{D}\bm{y}^{\bm{0}}(\bm{m})),
\end{equation}
where $\bm{y}^{\bm{0}}(\bm{m}) = \bm{m} + K\widehat{\bm{u}}(\bm{m})$. Since in this framework the triplet $(\bm{u},\alpha,K)$ fully determines the displacement $\bm{U}$, we shall often identify $\bm{U} = (\bm{u},\alpha,K)$.

The lattice trapping range $I$ from \eqref{Krange} can be found by tracing continuous paths of solutions $(0,1) \ni s \mapsto \bm{U}_s = (\bm{u}_s,\alpha_s,K_s)$, such that 
\[
(\delta_{\bm{u}} \mathcal{E}(\bm{U}_s),\delta_{\alpha}\mathcal{E}(\bm{U}_s)) = \bm{0}.
\]
As reported in \cite{BK2021} and earlier in \cite{2019-antiplanecrack}, the resulting path $s \mapsto \bm{y}_s$ of equilibrium configurations is expected to be a vertical snaking curve, capturing bond-breaking events, with $K_s$ oscillating within a fixed interval, which is the lattice trapping interval defined in \eqref{Krange} (see Figure~\ref{fig:lat_trap_I} for an example of a numerically computed snaking curve). 

We refer to \cite{2018-antiplanecrack,2019-antiplanecrack} for a rigorous derivation of the infinite lattice model. As will be noted in Section~\ref{sec:comp}, in the present work, we restrict our attention to the case where \eqref{Du-Duhat} is satisfied through setting  $\bm{u}(\bm{m})= \bm{0}$ for all $\bm{m} \in \bm{\La}$, such that $|\bm{m}| > R_0$, for some suitably chosen $R_0$.

We further note that the continuum-theory based prediction for the critical stress intensity factor given by \eqref{Kc} can be computed for the atomistic model, since the shear modulus $\mu$ and the surface energy $\gamma$ can be computed directly from the atomistic model. It is in fact widely assumed that, in the infinite lattice, the critical stress intensity factor in the atomistic description \eqref{Kat} and in the continuum description \eqref{Kc} coincide, that is, 
\begin{equation}\label{Kc-Kat}
\widetilde{K}_{\rm at} = \widetilde{K}_{\rm cont}.
\end{equation}
The numerical work presented in Section~\ref{sec:comp} will, among other things, provide evidence that, in our model, this equality holds, subject to accounting for finite-domain effects.

\section{Stochastic atomistic framework}\label{sec:stoch_atom}

\subsection{An information-theoretic formulation for  Lennard-Jones potential}\label{sec:stoch-LJ}

We aim to quantify how the uncertainty in the choice in the model parameters propagates to the computed quantities of interest (QoI), which in the case of atomistic fracture are
\begin{equation}\label{QoI}
{\rm {(QoI)}}\quad I, \; \widetilde{K}_{\rm at} \text{ and }\widetilde{K}_{\rm cont}.
\end{equation}

Inspired by the corresponding work in the continuum setup from \cite{guill_soize1}, outlined in Section \ref{sec:stoch-cont}, we invoke the MaxEnt to infer the probability distributions of parameters present in the model, that is
\[
(a_1,\ar, \ell, R_*) \in \R^{4},
\]
where we recall that $a_1$ and $\ar$ are the potential parameters introduced in \eqref{LJ}, $\ell$ is the lattice constant introduced in \eqref{lattice}, $R_*$ is the interaction radius from \eqref{Rc}. 

As noted in \eqref{lat-const-dep}, for the Lennard-Jones potential $\phi$ defined by \eqref{LJ}, the lattice constant $\ell$ is uniquely determined by $\ar$ and $R_*$, so $\ell$ is not an independent parameter.

We further recognize the special nature of the interaction radius parameter $R_*$, which is not a parameter that would typically be considered as a random variable, but rather fixed a priori. Even if it was to be modeled as a random variable, and we note that the information-theoretic stochastic framework provides us with a way of doing so, it would be effectively a countable random variable. For the purpose of analysis, in this section, we consider $R_*$ fixed and later in the numerical section we will consider three deterministic choices for $R_*$, corresponding to including interaction with up to first, second and third nearest neighbors, respectively (see Figure~\ref{fig:interactions}).

We gather the remaining independent parameters as
\begin{equation}\label{A-ran}
\bm{A} = (a_1, \ar) \in \R^2.
\end{equation}
Recalling the set of natural constraints \emph{(P1)-(P2)} in Section \ref{sec:stoch-cont}, we first restate \emph{(P1)} as 
\begin{equation}\label{mean}
\text{\emph{(P1)}: }\;\mathbb{E}(\bm{A}) = \underline{\bm{a}},
\end{equation}
where $\underline{\bm{a}} = (\underline{a_1},\underline{\ar})$ is known, corresponding to default parameters of the potential. The second constraint \emph{(P2)} concerns the elasticity tensor, which, through the Cauchy-Born rule discussed in Section \ref{sec:CB} and the underlying assumption of planar elasticity and the pairwise nature of the interatomic potential, simplifies so that the only independent parameter is the shear modulus $\mu$, which, as established in \eqref{mu-a1a2}, is a function of $\bm{A}$, is the only independent elastic constant. This leads us to recast \emph{(P2)} as
\begin{equation}\label{logmu}
\text{\emph{(P2):} }\;\mathbb{E}(\log(\mu(\bm{A}))) = \nu_{\bm{A}},
\end{equation}
where $\nu_{\bm{A}}$ is a given parameter such that $|\nu_{\bm{A}}| < \infty$. For the rationale as to why the condition of this type ensures \emph{(P2)} we refer to \cite{guill_soize1}.

\begin{prop}\label{prop1}
Under the constraints \eqref{mean} and \eqref{logmu}, the MaxEnt probability density function of the random variable $\bm{A}$ defined in \eqref{A-ran} is given by
\[
\rho_{\bm{A}}(\bm{a}) =  \rho_{A_1}(a_1) \times  \rho_{A_2}(\ar),
\]
where
\[
\rho_{A_1}(a_1) = \mathbbm{1}_{\R_+}(a_1)k_1 a_1^{-\tau}\exp\{-\lambda_{1}a_1\}
\]
and
\[
\rho_{A_2}(\ar) = \mathbbm{1}_{\R_+}(\ar)k_2\ar^{-2\tau}\exp\{-\lambda_{2}\ar\},
\]
with $k_1$ and $k_2$ positive normalization constants, and $\lambda_{1}$ and $\lambda_{2}$ Lagrange multipliers corresponding to \emph{(P1)}. The parameter $\tau$ controls the level of statistical fluctuations and is required to satisfy $\tau \in (-\infty,1/2)$.

It follows that $a_1$ and $\ar$ are statistically independent, with $a_1$ Gamma-distributed with shape and scale hyperparameters $(\alpha_{1},\beta_{1}) = \left(1-\tau,\underline{a_1}/(1-\tau)\right)$ and $\ar$ Gamma-distributed with shape and scale hyperparameters $(\alpha_{2},\beta_{2}) = \left(1{-2\tau},\underline{\ar}/(1{-2\tau})\right)$. 
\end{prop}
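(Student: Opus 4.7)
My approach is to specialize the general MaxEnt construction \eqref{rho-general} to the constraints \eqref{mean} and \eqref{logmu}, together with the implicit positivity constraint $\bm{A}\in\R_+^2$ enforced through the support $S=\R_+^2$. Encoding these conditions in the expectation form \eqref{E-f}, the MaxEnt density must take the general form
\[
\rho_{\bm A}(\bm a)=\mathbbm{1}_{\R_+^2}(\bm a)\,k\,\exp\{-\lambda_{1}a_1-\lambda_{2}\ar-\tau\log\mu(\bm a)\},
\]
for some Lagrange multipliers $\lambda_1,\lambda_2,\tau$ and normalization constant $k>0$; existence and uniqueness of such a maximizer is the abstract fact inherited from \cite{guill_soize1}.

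The decisive step is to exploit the explicit Cauchy-Born expression $\mu(\bm A)=D_{R_*}a_1 \ar^{2}$ from \eqref{mu-a1a2}. Substituting and absorbing the deterministic factor $D_{R_*}^{-\tau}$ into $k$ yields
\[
\rho_{\bm A}(\bm a)=\mathbbm{1}_{\R_+^2}(\bm a)\,k\,a_1^{-\tau}\ar^{-2\tau}\exp\{-\lambda_{1}a_1-\lambda_{2}\ar\},
\]
which factorizes into a product of a function of $a_1$ alone and a function of $\ar$ alone. Writing $k=k_1k_2$, with each $k_i$ the normalizer of its marginal, immediately yields statistical independence and the two marginal densities stated in the proposition.

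It remains to recognize $\rho_{A_1}(a_1)\propto a_1^{(1-\tau)-1}e^{-\lambda_1 a_1}$ as the density of $\Gamma(1-\tau,1/\lambda_1)$, and $\rho_{A_2}$ analogously as $\Gamma(1-2\tau,1/\lambda_2)$. Integrability at the origin forces both shape parameters to be strictly positive, yielding the combined restriction $\tau<1/2$. Matching means via \eqref{mean} and the Gamma identity $\mathbb{E}\{a_i\}=\alpha_i\beta_i$ then fixes the scale parameters as $\beta_1=\underline{a_1}/(1-\tau)$ and $\beta_2=\underline{\ar}/(1-2\tau)$, completing the identification; the remaining Lagrange multiplier $\tau$ is tied implicitly to \eqref{logmu} via $\mathbb{E}\{\log\mu(\bm A)\}=\log D_{R_*}+\mathbb{E}\{\log a_1\}+2\mathbb{E}\{\log\ar\}$ and the digamma formula for Gamma log-moments, but this calibration need not enter the statement.

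The only genuine obstacle is the first step, namely justifying that the MaxEnt optimizer on $\R_+^2$ subject to the prescribed moment constraints has the exponential form displayed above. I would handle this purely by citation to the existence and uniqueness framework in \cite{guill_soize1}, after which the remainder of the proof reduces to the algebraic factorization and moment matching sketched here.
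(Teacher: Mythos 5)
Your proposal is correct and follows essentially the same route as the paper's proof: cast the constraints in the expectation form \eqref{E-f}, invoke \eqref{rho-general} on the support $\R_+^2$, substitute $\mu(\bm{A})=D_{R_*}a_1\ar^2$ to factorize the density, and identify the marginals as Gamma distributions. In fact you spell out slightly more than the paper does, namely the integrability argument at the origin that yields $\tau<1/2$ and the mean-matching that fixes the scale hyperparameters, both of which the paper leaves implicit.
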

\begin{proof}
The constraints in \eqref{mean} and \eqref{logmu}, together with the normalization constraint, can be put in the form of a mathematical expectation as in \eqref{E-f}, namely
\[
\mathbb{E}\{\bm{g}(\bm{A})\} = \bm{\hat{g}},
\]
where $\bm{g}\,\colon\,\R^2 \to \R^4$ with $\bm{g}(\bm{A}) = (\bm{A}, \log(\mu(\bm{A})),1) \in \R^4$ and  $\bm{\hat{g}} = (\underline{\bm{a}},\nu_{\bm{A}},1) \in \R^4$. It follows from \eqref{rho-general} that
\[
\rho_{\bm A}(\bm{a}) = \bm{1}_{R^2_+}(\bm{a}) \exp\{- \<{\bm{\lambda}}{\bm{g}(\bm{a})}_{\R^4}\}, 
\]
as $R^2_+$ is the largest set on which \eqref{logmu} is satisfied. Since
\[
\exp\{-\<{\bm{\lambda}}{\bm{g}(\bm{a})}_{\R^4}\} = k_0\exp(-\lambda_1 a_1)a_1^{-\lambda_3}\exp(-\lambda_2 \ar)\ar^{-2\lambda_3},
\]
where $k_0 = \exp(-\lambda_4)D_{R_*}^{-\lambda_3}$, the result follows by identifying $\lambda_3 = \tau$, and an appropriate splitting of the normalization constant as $k_0 = k_1 k_2$. 
\end{proof}
\begin{remark}\label{rem1}
The Lennard-Jones potential $\phi$ defined \eqref{LJ} is typically introduced with the second parameter $a_2' := \ar^{-1}$. From the information-theoretic point of view it is far less convenient to do so, as then the MaxEnt distribution of $a'_2$ can be shown (using the framework discussed in this section) to be the Gamma distribution with shape and scale hyperparameters $(1+2\tau, \underline{a'_2}/(1+2\tau))$, provided that $\tau \in (-\frac{1}{2},+\infty)$. Thus, the setup where both $a_1$ and $a_2'$ follow the Gamma distribution would only apply when $\tau \in (-\frac{1}{2},1)$, which is more restrictive than what we obtain in Proposition~\ref{prop1}.
\end{remark}

For the model under consideration, the following can be subsequently established about $\widetilde{K}_{\rm cont}$. 
\begin{prop}\label{prop2}
The critical stress intensity factor $\widetilde{K}_{\rm cont}$, when computed for the shear modulus $\mu$ and the surface energy $\gamma$ obtained directly from the atomistic model, satisfies
\[
\widetilde{K}_{\rm cont} = C_{R_*} a_1\,\ar^{3/2},
\]
where the constant $C_{R_*}$ depends only on the interaction range $R_*$.

If $\bm{A}$ is taken to follow the MaxEnt distribution established in Proposition~\ref{prop1}, then $\widetilde{K}_{\rm cont}$ is a random variable with the probability density function $\rho_{\widetilde{K}_{\rm cont}}$ given by
\[
\rho_{\widetilde{K}_{\rm cont}}(k) = \frac{1}{C_{R_*}}\int_{\R}\ar^{-3/2} \rho_{\bm A}\left(\frac{k}{C_{R_*}\ar^{3/2}},\ar\right)d\ar.
\]
\end{prop}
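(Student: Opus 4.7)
The plan is to establish the closed-form expression $\widetilde{K}_{\rm cont}=C_{R_*}a_1\ar^{3/2}$ by a direct scaling argument, and then to obtain the density of $\widetilde{K}_{\rm cont}$ as a random variable via a single change of variables in the joint law of $\bm A$ followed by marginalisation.

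For the first part, I would start from the continuum formula \eqref{Kc}, $\widetilde{K}_{\rm cont}=4\sqrt{\gamma\mu/3}$, and combine it with two scaling identities in the parameters $(a_1,\ar)$. The elastic part is already provided: by \eqref{mu-a1a2}, $\mu=D_{R_*}a_1\ar^{2}$ with $D_{R_*}$ depending only on $R_*$. What remains is to prove the companion identity $\gamma=E_{R_*}a_1\ar$ for an analogous $R_*$-dependent prefactor $E_{R_*}$. This should follow from two observations about how the Lennard-Jones potential \eqref{LJ} depends on its parameters. Inspecting \eqref{LJ}, $a_1$ appears as a global energetic prefactor, so every bond-contribution to the atomistic surface energy scales linearly in $a_1$; meanwhile, by \eqref{lat-const-dep}, the lattice constant satisfies $\ell\propto\ar^{-1}$, so the number of broken bonds per unit length of newly created crack surface scales as $\ar$. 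Combining these, $\gamma=E_{R_*}a_1\ar$; substituting into \eqref{Kc} then yields the stated relation with $C_{R_*}=4\sqrt{E_{R_*}D_{R_*}/3}$.

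For the second part, I would invoke the change-of-variables formula. Define $\Psi: (0,\infty)^2\to(0,\infty)^2$ by $\Psi(a_1,\ar)=(C_{R_*}a_1\ar^{3/2},\ar)$. This is a diffeomorphism with inverse $(k,\ar)\mapsto(k/(C_{R_*}\ar^{3/2}),\ar)$ and Jacobian determinant $C_{R_*}\ar^{3/2}$. Pushing the joint density $\rho_{\bm A}$ forward along $\Psi$ therefore yields the joint density of $(\widetilde{K}_{\rm cont},\ar)$, and marginalising in $\ar$ gives
\[
\rho_{\widetilde{K}_{\rm cont}}(k)=\int_{0}^{\infty} \rho_{\bm A}\!\left(\frac{k}{C_{R_*}\ar^{3/2}},\,\ar\right)\frac{1}{C_{R_*}\,\ar^{3/2}}\,d\ar,
\]
which is the claimed formula; extending the lower limit of integration to $-\infty$ is harmless, since by Proposition~\ref{prop1} the density $\rho_{\bm A}$ is supported in the positive quadrant.

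The main obstacle is justifying the surface-energy scaling $\gamma=E_{R_*}a_1\ar$ rigorously rather than heuristically. Concretely, one must identify $\gamma$ with an explicit sum of pair-potential contributions associated with the bonds that cross the newly created crack surface, and then verify that under the rescaling $(a_1,\ar)\mapsto (\lambda a_1,\mu\ar)$ together with the compensating rescaling of distances by $\mu^{-1}$ the sum factors cleanly as $\lambda\mu \cdot E_{R_*}$, with $E_{R_*}$ depending only on the combinatorics of $\Rc$ and the triangular lattice. This is analogous in spirit to the pointwise computation behind \eqref{mu-a1a2} and would naturally be placed in Appendix~\ref{appendA}. Once that identity is in hand, the density calculation reduces to the routine Jacobian computation above.
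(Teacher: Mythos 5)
Your proposal is correct and follows essentially the same route as the paper: the first part combines \eqref{Kc} with $\mu = D_{R_*}a_1\ar^{2}$ and the surface-energy scaling $\gamma \propto a_1\ar$ (which the paper likewise establishes by noting that $\phi(d\,\ell)$ is proportional to $a_1$ and independent of $\ar$ once $\ell\propto\ar^{-1}$ is used, with the general-$R_*$ bookkeeping deferred to Appendix~\ref{appendA}), and the second part is the same change of variables, which the paper phrases via a Dirac-delta marginalisation rather than your explicit diffeomorphism-plus-Jacobian formulation. The obstacle you flag --- verifying that the $\ar$-dependence cancels inside each broken-bond energy --- is exactly the content of the appendix computation, so nothing essential is missing.
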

\begin{proof}
At the atomistic level of description, the energetic cost of creating a surface is equivalent to the energetic cost of breaking interaction bonds between atoms on opposite sides of the surface.

We assume first that $R_* = 1$, that is, we only look at the nearest neighbor interaction. In this case, the lattice constant $\ell$ minimizes the potential $\phi$, and in fact $\phi(\ell) = -a_1$. The cost of breaking one bond is then
\[
\lim_{r\to \infty}\phi(r) - \phi(\ell) = -\phi(\ell) = a_1.
\]
When the crack surface is extended by length $L$, on the triangular lattice, this corresponds to breaking interaction $2L/\ell$ bonds. Then the surface energy per unit area $\gamma$ from \eqref{Kc} is given by 
\[
\gamma = \frac{1}{L}\frac{2L}{\ell}\left(\lim_{r\to \infty}\phi(r) - \phi(\ell) \right) = \tilde{C}_1  a_1 \ar,
\]
where $\tilde{C}_1 = 2^{5/6}$. This follows from \eqref{lat-const-dep}.

It is shown in Appendix~\ref{appendA} that, in the case of a general $R_*$, we have
\begin{equation}\label{gamma-gen}
\gamma = \tilde{C}_{R_*} a_1 \ar.
\end{equation}
Using \eqref{Kc} and \eqref{mu-a1a2}, we then arrive at
\[
\widetilde{K}_{\rm cont} = \frac{4}{\sqrt{3}} \sqrt{\gamma\mu} = C_{R_*}  \sqrt{a_1} \sqrt{\ar} \sqrt{a_1} \ar = C_{R_*} a_1 \ar^{3/2},
\]
as required.

The probability density function of $\widetilde{K}_{\rm cont}$ follows from a general formula
\[
\rho_{\widetilde{K}_{\rm cont}}(k) = \int_{\R}\int_{\R} \rho_{\bm{A}}(\bm{a})\delta\left(k - C_{R_*} a_1 \ar^{3/2}\right)da_1\,d\ar,
\]
where $\delta$ is the Dirac delta. To obtain the result, in the inner integral (in which $\ar$ is treated as fixed), one performs a change of variables from $a_1$ to $\tilde{k} = C_{R_*}a_1\ar^{3/2}$.
\end{proof}

\section{Computations}\label{sec:comp}

In this section, we use the stochastic framework developed in Section~\ref{sec:stoch-LJ} to conduct a numerical study of crack propagation. 

\subsection{Setup}
For our numerical computations, we employ the principles of the recently proposed NCFlex scheme \cite{BK2021}. We fix $\tilde{R}=32$ and consider a computational domain
\begin{equation}
\bm{\Lambda}_{R} = \bm{\La} \cap B_{R},\,\text{ where }\, R = \ell(\tilde{R} + 2R_*),
\end{equation}
then look at displacements $\bm{U}\,\colon\,\bm{\La} \to \R^2$ of the form
\begin{equation}\label{UU}
\bm{U}(\bm{m}) =K\widehat{\bm{u}}(\bm{m} - \bm{\alpha}) + \bm{u}(\bm{m})\,\text{ where } \bm{u}(\bm{m}) = \bm{0}\;\forall\,\bm{m} \in \bm{\La} \,\text{ such that } |\bm{m}| > \ell(\tilde{R}+R_*).
\end{equation}
The rescaling by $\ell$ ensures that, regardless of the choice of $\ell$, for a fixed $R_*$, the computational domain consists of the same number of atoms $N \sim R^2$. The truncation of $\bm{u}$ ensures that the finite-dimensional scheme is consistent with \eqref{Du-Duhat}.

We consider three possible choices for $R_*$, namely:
\begin{enumerate}[label=(\roman*)]
\item $R_* = 1$, which corresponds to accounting for only the nearest neighbor interaction;
\item $R_* = \sqrt{3}$ (second neighbors included too);
\item $R_* = 2$ (up to third neighbors included).
\end{enumerate}

The resulting finite-dimensional approximation to \eqref{en-infinite} is given by
\[
\tilde{\E}(\bm{u},\alpha,K) = \sum_{m \in \bm{\La}_{{R}}} V(\bm{D}\bm{y}^{\bm{U}}(\bm{m})) - V(\bm{D}\bm{y}^{\bm{0}}(\bm{m})),
\]
with $\bm{y}^{\bm{0}}$ as in \eqref{en-infinite}.

The essence of the NCFlex scheme is to employ numerical continuation to trace continuous paths of solutions $(0,1) \ni s \mapsto \bm{U}_s = (\bm{u}_s,\alpha_s,K_s)$, such that 
\begin{equation}\label{ncf-eqn}
(\delta_{\bm{u}} \tilde{\mathcal{E}}(\bm{U}_s),\delta_{\alpha}\tilde{\mathcal{E}}(\bm{U}_s)) = \bm{0}.
\end{equation}
This is a nonlinear system of $2N+1$ equations in $2N+2$ variables and a numerical continuation constraint closes the system.

The specific numerical algorithm employed allows for the quantities of interest to be computed without human supervision. The details are presented in Algorithm~\ref{alg1} and we note that the numerical continuation routine is implemented in Julia using \texttt{BifurcationKit.jl} \cite{BifKit}.

\begin{algorithm}[H]
   \caption{Unsupervised NCFlex scheme}\label{alg1}
   \begin{algorithmic}[1]
    \State Given potential parameters $a_1,\ar$ from \eqref{LJ} and some tolerance $\delta$;
      \State Estimate the interval $I = (K_-,K_+)$ by fixing $\bm{u}(\bm{m}) =  0,\;\forall \bm{m} \in \bm{\La}_R$ in \eqref{UU}, and solving $\delta_{\alpha}\tilde{\E}(\bm{U}) = 0$, up to the tolerance $\delta$, for incremental values of $\alpha$ ranging from $-\ell$ to $\ell$ (idea put forward in \cite[Section II.C.2]{BK2021});
      \State Fix $\alpha = \alpha_0$ (e.g., $\alpha_0 = -0.5$), set $K = K_-$ (estimate found in the previous step) and use a Conjugate-Gradient solver with initial guess $(\bm{0},\alpha_0,K_-)$ to find a static boundary equilibrium $\bm{U} = (\bm{u},\alpha_0,K_-)$ satisfying only $\delta_{\bm{u}} \tilde{\E}(\bm{U}) = \bm{0}$, up to tolerance $\delta$ (typically the other equation in \eqref{ncf-eqn} will not be satisfied);
      \State Repeat previous step for incremental values of $K \in (K_-,K_+)$ (estimate found in Step 2) until one identifies $K_0$ for which $\bm{U_0} = (\bm{u_0},\alpha_0,K_0)$ is such that $\delta_{\alpha} \tilde{\E}(\bm{U}) = 0$ (up to tolerance $\delta$) as well, meaning that \eqref{ncf-eqn} holds true (a bisection-type algorithm can be used to speed up the process);
      \State With the first solution $\bm{U_0}$ of the NCFlex scheme identified, apply the numerical continuation routine outlined in \cite[Algorithm 2.]{BK2021} to compute the path of solutions $s \mapsto \bm{U}_s$.
\end{algorithmic}
\end{algorithm}

\begin{figure}[htbp]
\includegraphics[width=0.9\textwidth]{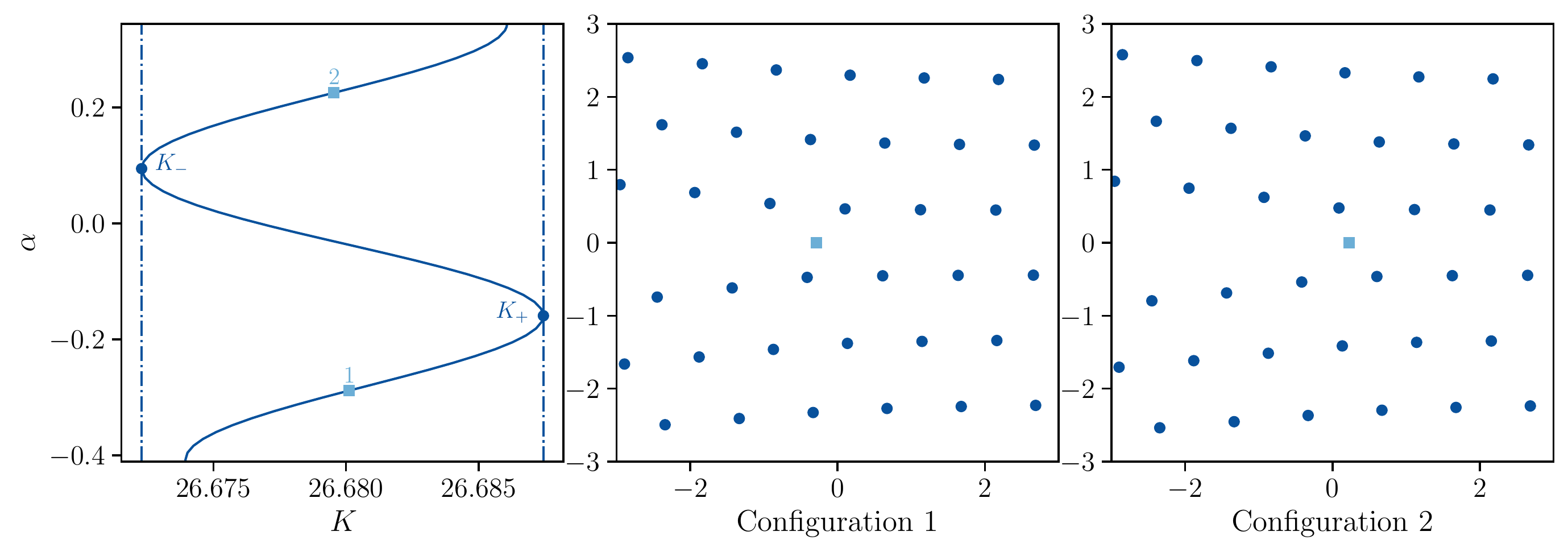}
\caption{Left: The plot $s \mapsto (K_s,\alpha_s)$ showing how $K_+$ and $K_-$ are identified. The leftward-tilt of the curve is a finite-domain effect -- for a theoretical infinite domain the solution curve would be perfectly vertical. Middle and right: atomistic configurations highlighting that a jump by one period on the snaking curve corresponds to the crack propagating by one lattice spacing.}
\label{fig:lat_trap_I}
\end{figure}

As noted in Section~\ref{sec:atom-mode1}, the lattice trapping range $I$ and the critical stress intensity factor $\widetilde{K}_{\rm at}$ can be inferred from the computed solution paths (see Figure~\ref{fig:lat_trap_I}). Note, however, that the computed quantities of interest are finite domain approximations. Hence, in particular, $\widetilde{K}^R_{\rm at}$ computed for a domain with radius $R$ will not match the theoretical $\widetilde{K}_{\rm at}$ from Proposition~\ref{prop2}. Therefore, direct comparisons to $\widetilde{K}_{\rm cont}$ are not feasible. Nevertheless, heuristic considerations and numerical evidence point to the fact that
\[
|\widetilde{K}^R_{\rm at} - \widetilde{K}_{\rm at}| \sim \mathcal{O}(R^{-1/2}).
\]

\subsection{Results} 

We have considered the following cases in our numerical study:
\begin{enumerate}
\item $a_1 = 1$ fixed and a sample of $\num{1000}$ choices of $\ar$ with $\underline{\ar} = 2^{1/6}$ and $\tau = -20$; 
\item $\ar = 2^{1/6}$ fixed and a sample of $\num{1000}$ choices of $a_1$ with $\underline{a_1} = 1$ and $\tau = -20$;
\item a sample of $\num{1000}$ choices of $\bm{A}$ with $\underline{a_1} = 1$, $\underline{\ar} = 2^{1/6}$ and $\tau = -20$;
\item a combined sample of $\num{1000000}$ of $\bm{A}$ obtained by reusing the samples from (1) and (2);
\item $a_1 = 1$ fixed, a sample of $\num{1000}$ choices of $\ar$ with $\underline{\ar} = 2^{1/6}$ and $\tau = - \num{4000000}$ to test the interplay between the strength of statistical fluctuations and the strength of lattice trapping.
\end{enumerate}
Figure~\ref{fig:stat_fluct} presents the level of statistical fluctuations present in $\phi$ and how this translates to the computed snaking curves. 

\begin{figure}[htbp]
\includegraphics[width=0.8\textwidth]{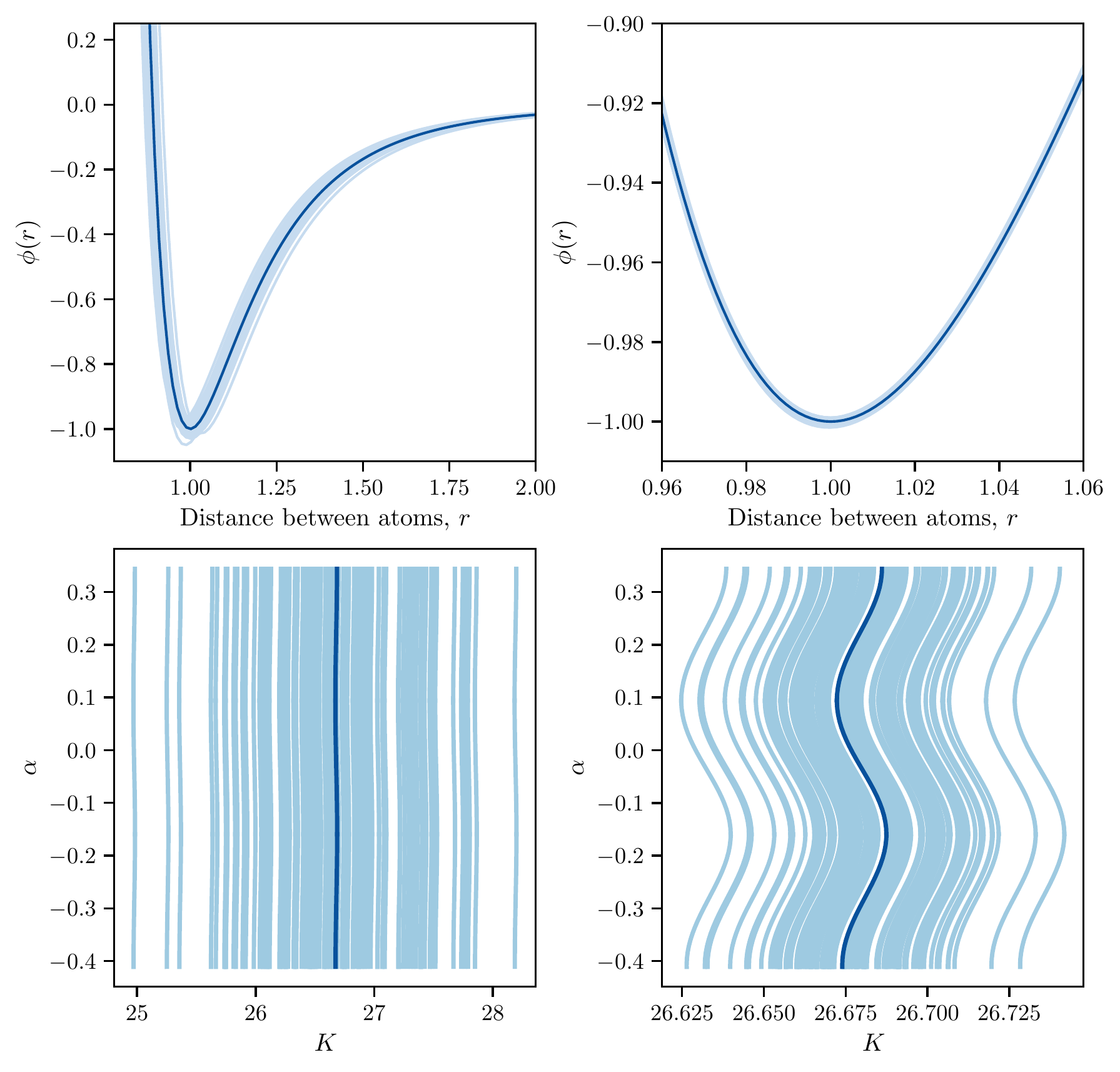}
\caption{Statistical fluctuations for $\tau = -400$ (left column) and $\tau = \num{-4000000}$ (right column). Top row: plot of the interatomic potential $\phi$ for the mean value of parameters (dark blue) and a sample of 100 choices of parameters $a_1,\ar$ (light blue, on the right zoomed-in around unity to actually see the fluctuations). Bottom row: the resulting computed snaking curves.}
\label{fig:stat_fluct}
\end{figure}

There are several universal conclusions that can be drawn from our numerical investigation, which we shall now discuss and then refer to in the subsequent subsections detailing each case listed above.

Firstly, it will be numerically verified that the relative strength of the lattice trapping, which we measure as $1 - (K_-/K_+)$, in our model is not a function of $a_1$ or $\ar$, but merely of $R_*$. On a heuristic level, this reflects the fact that the lattice constant $\ell$ is a linear function of $\ar$ and is consistent with the work presented in \cite{curtin_1990}. Our results will also corroborate our conjecture that, in the model considered, $K_+$ and $K_-$, for a fixed domain radius $R$, exhibit the following dependence on $a_1$, $\ar$ and $R_*$,
\begin{equation}\label{KpKm}
K_+ = C^+_{R_*} a_1\,\ar^{3/2}, \quad K_- = C^-_{R_*} a_1\,\ar^{3/2},
\end{equation}
differing from $\widetilde{K}_{\rm cont}$ from Proposition~\ref{prop2} only by a constant which depends on $R_*$. In particular, we will present numerically obtained values for $C^{\pm}_{R_*}$. This is strong evidence that, in fact, the equality $\widetilde{K}_{\rm at} = \widetilde{K}_{\rm cont}$ from \eqref{Kc-Kat} holds true for our model. 

Secondly, the generally nonlinear dependence of quantities of interest on the parameters, as established in Proposition~\ref{prop2} and in \eqref{KpKm}, implies that, e.g., $\mathbb{E}(K_+)$ does not correspond to the deterministic value obtained when parameters are equal to mean values. This alone indicates that employing a purely deterministic approach to model atomistic fracture is of limited practical use. 

Thirdly, the value of the parameter $\tau$ from Proposition~\ref{prop1} plays a crucial role in determining whether the extent of lattice trapping is negligible or not. For $\tau = -20$, it most certainly is, and hence, for this case, since $\widetilde{K}_{\rm at}$ lies somewhere between $K_-$ and $K_+$, we can safely focus on the outer quantities only. However, as $\tau \to -\infty$, lattice trapping starts to dominate over statistical fluctuations. We show this by considering the extreme case with $\tau = \num{-4000000}$.

We now present the results of our numerical study.

\begin{figure}[htbp]
	\includegraphics[width=0.8\textwidth]{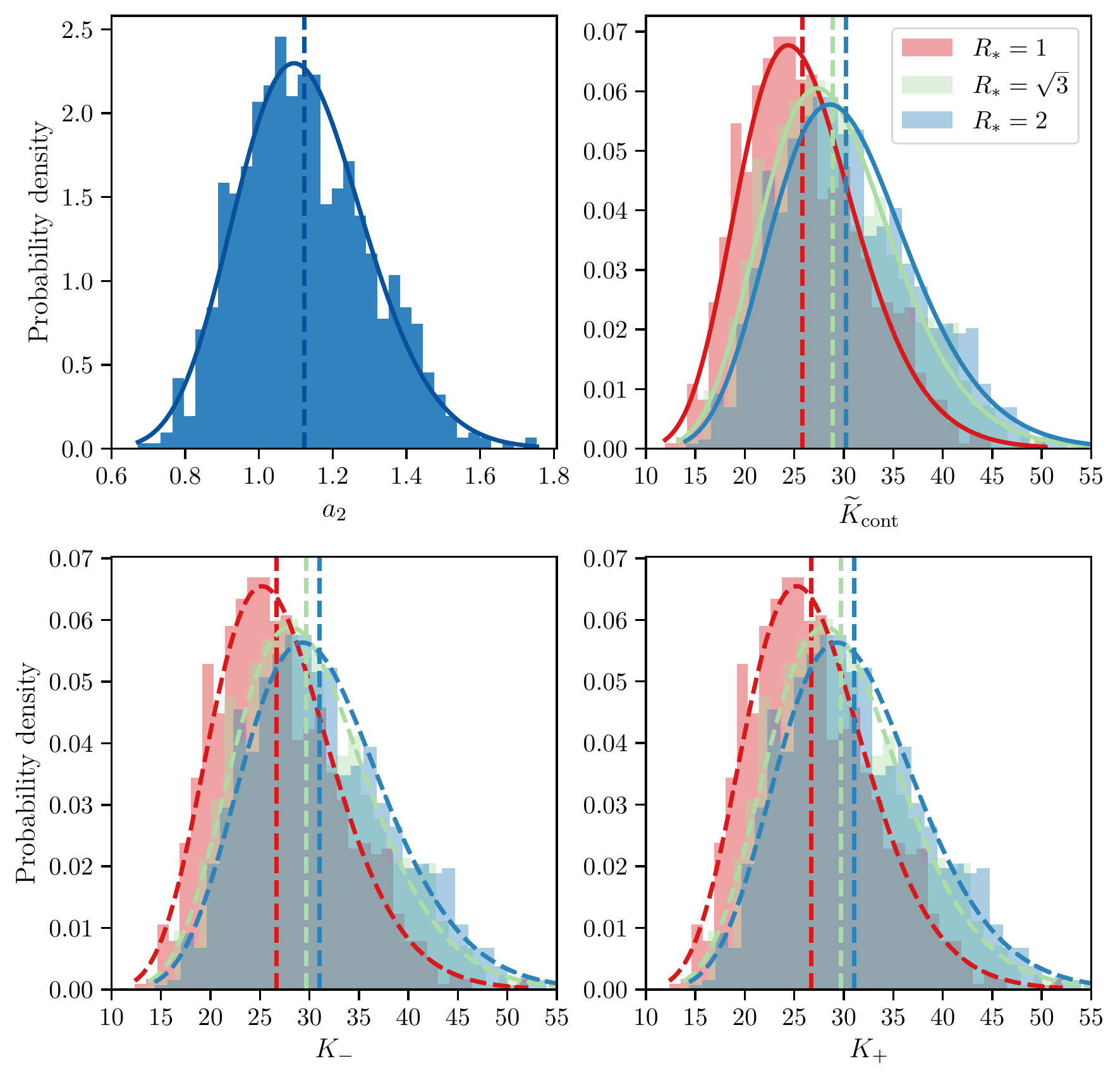}
	\caption{Case (1). Top left: the normalized histogram of a sample of 1000 choices of $\ar$ drawn from the MaxEnt probability distribution established in Proposition~\ref{prop1}, with $\underline{a_1} = 2^{1/6}$ and $\tau = -20$, together with the probability density function $\rho_{A_2}$. Top right: the histogram and probability density function for $\widetilde{K}_{\rm cont}$ from Proposition~\ref{prop2} for $R_* = 1,\,\sqrt{3},\,2$. Bottom: the resulting numerically computed histogram of values of $K_-$ (left) and $K_+$ (right). The dotted lines are numerically predicted probability density functions, based on \eqref{KpKm}, with values of $C^+_{R_*}, C^-_{R_*}$ reported in Table~\ref{tab:a2}.}\label{fig:a2_hist}
\end{figure}

\begin{figure}[htbp]
	\includegraphics[width=0.8\textwidth]{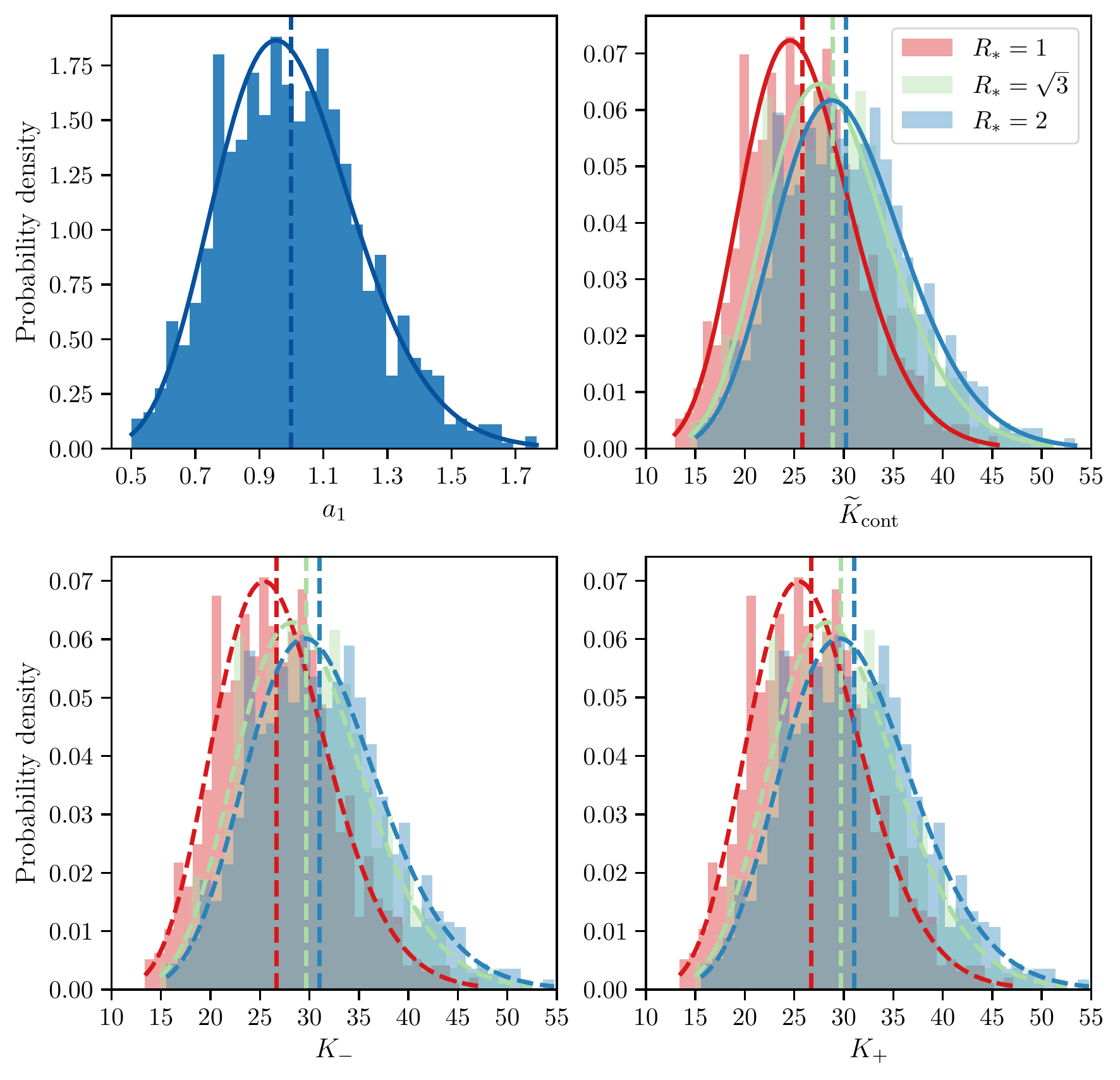}
	\caption{Case (2). Top left: the normalized histogram of a sample of 1000 choices of $a_1$ drawn from the MaxEnt probability distribution established in Proposition~\ref{prop1}, with $\underline{a_1} = 1$ and $\tau = -20$, together with the probability density function $\rho_{A_1}$. Top right: the histogram and probability density function for $\widetilde{K}_{\rm cont}$ from Proposition~\ref{prop2} for $R_* = 1,\,\sqrt{3},\,2$. Bottom: the resulting numerically computed histogram of values of $K_-$ (left) and $K_+$ (right). The dotted lines are numerically predicted probability density functions, based on \eqref{KpKm}, with values of $C^+_{R_*}, C^-_{R_*}$ reported in Table~\ref{tab:a1}.}\label{fig:a1_hist}
\end{figure}

\begin{figure}[htbp]
	\includegraphics[draft=false,width=0.95\textwidth]{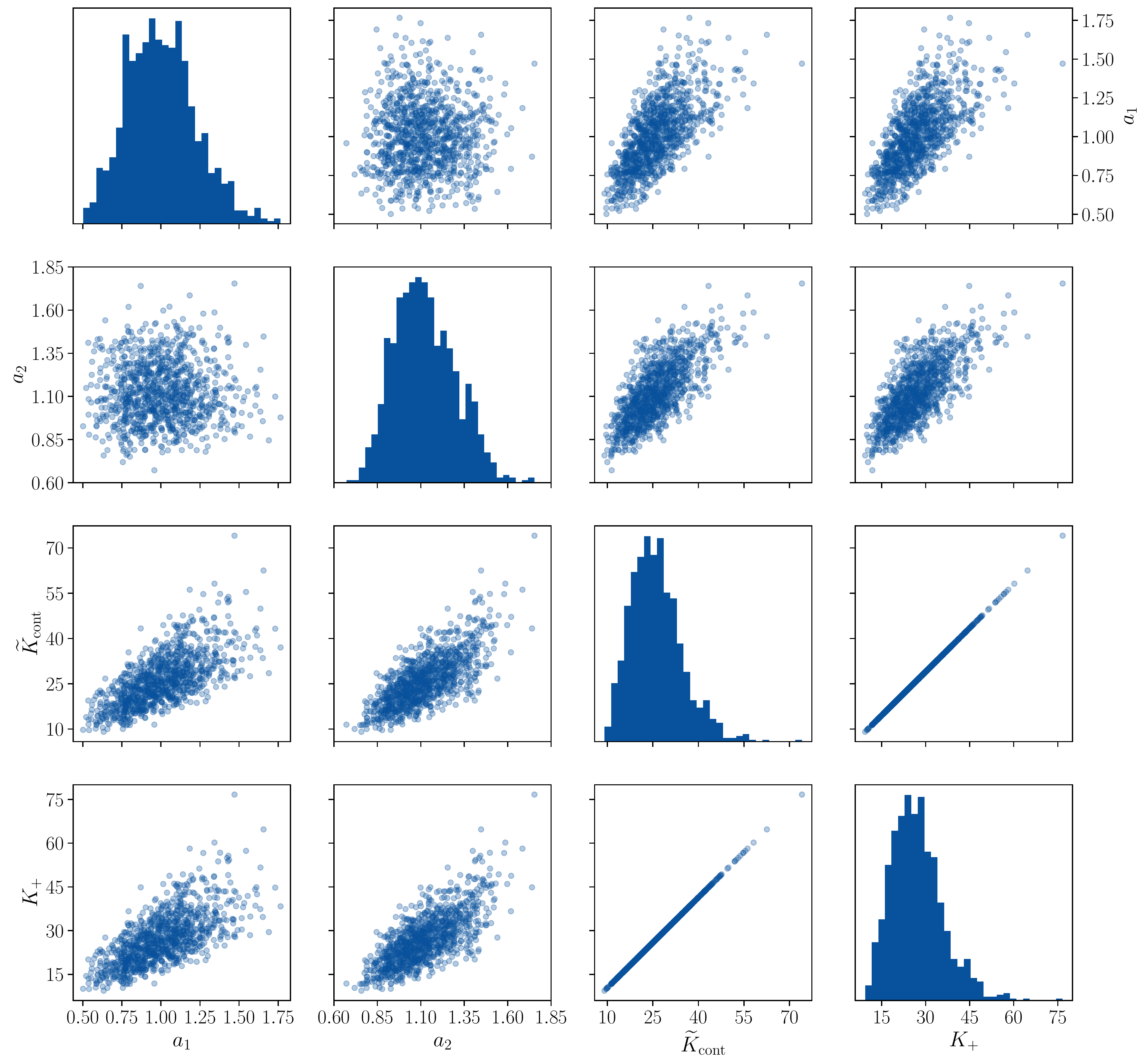}
	\caption{Case (3) when $R_{\ast} = 1$: a scatter matrix plot for variables $a_1,\,\ar,\,\widetilde{K}_{\rm cont}, K_+$, highlighting bivariate relationships between them (off the diagonal). On the diagonal histograms of the variables are presented (note that for histograms the vertical axis does not apply). The plots in particular confirm that $K_+$ is determined by $\widetilde{K}_{\rm cont}$ (and the fixed parameters $\tilde{R}$ and $R_{\ast}$).}
	\label{fig:scatmat_N1}
\end{figure}

\begin{figure}[htbp]
	\includegraphics[trim=0 0 0 0pt, clip,width=0.8\textwidth]{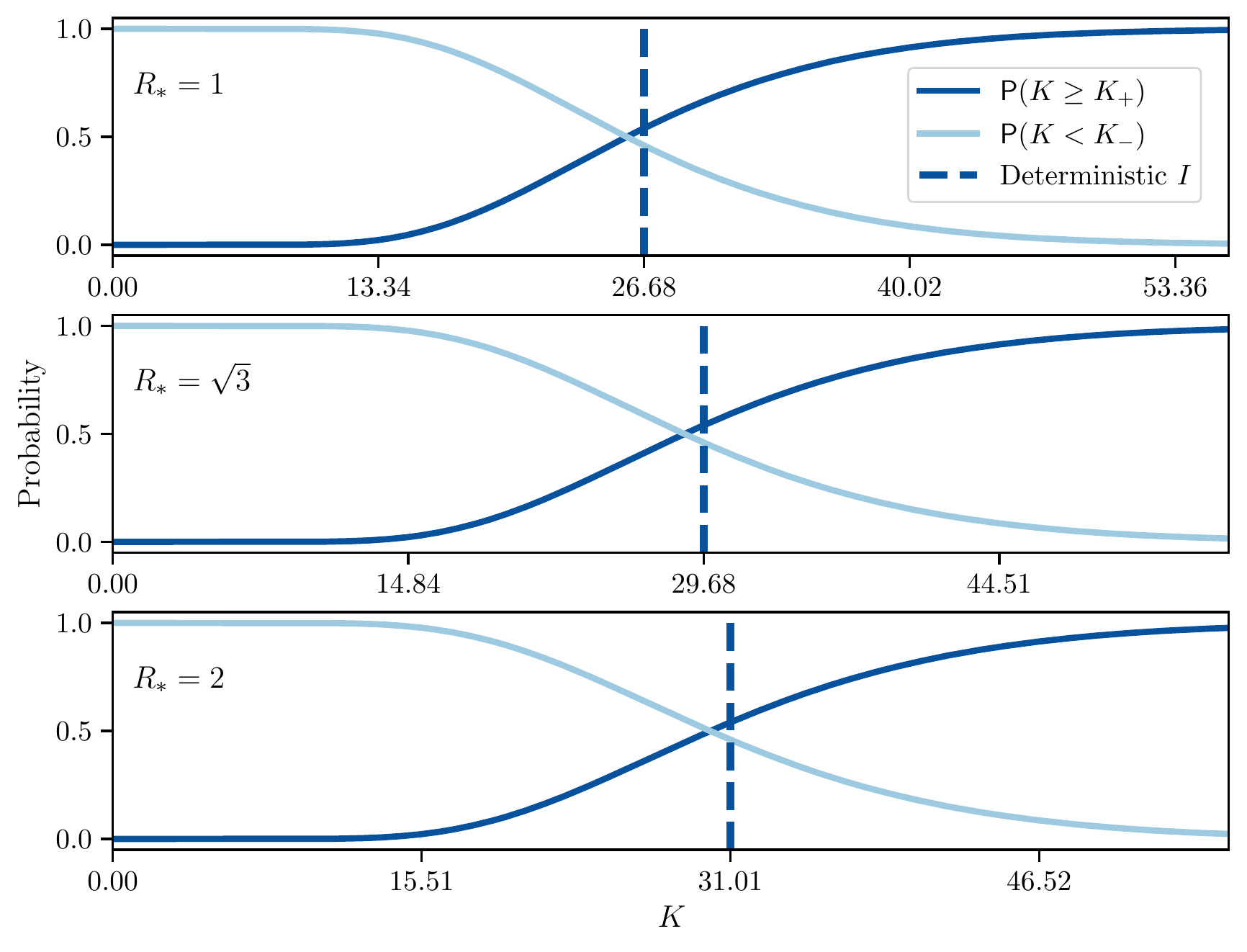}
	\caption{Case (4) when $R_{\ast} = 1,\,\sqrt{3},\,2$: the probability of the crack propagating at a given $K$, computed as $\mathsf{P}(K \geq K_+)$ and of not propagating, computed $\mathsf{P}(K<K_-)$ compared with the deterministic interval $I = (K_-,K_+)$ computed for the mean values  $\underline{a_1}$ and $\underline{\ar}$. Note that when $\tau = -20$, the lattice trapping strength is negligible compared to the statistical fluctuations and $I$ can be, effectively, treated as a single value. We further note that the probabilities were computed both analytically (using \eqref{KpKm} and Proposition~\ref{prop2}), as well as from data and for a sample of this size they are indistinguishable.}\label{fig:prob_case4}
\end{figure}

\begin{figure}[htbp]
	\includegraphics[trim=0 0 0 0pt, clip,width=0.8\textwidth]{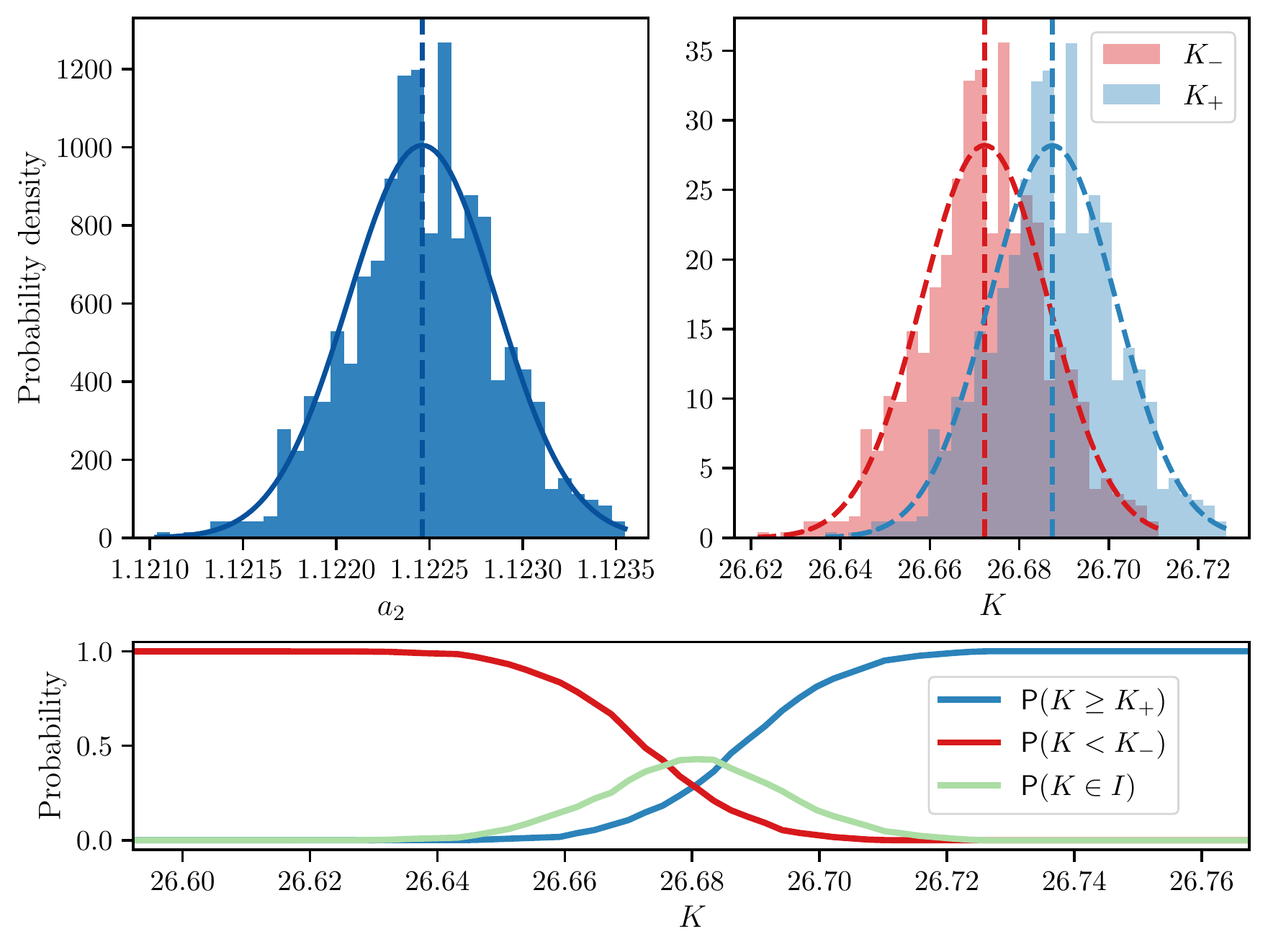}
	\caption{Case (5) when $R_{\ast} = 1$. Top left: the normalized histogram of a sample of 1000 choices of $\ar$ drawn from the MaxEnt probability distribution established in Proposition~\ref{prop1}, with $\underline{a_1} = 2^{1/6}$ and $\tau = \num{-4000000}$, together with the probability density function $\rho_{A_2}$. Top right: the histogram and probability density function for $K_-$ and $K_+$, with dotted lines are numerically predicted probability density functions, based on \eqref{KpKm}. Bottom: the probability of crack propagation at a given $K$, as in Figure~\ref{fig:prob_case4}. }\label{fig:case5}
\end{figure}

\subsubsection*{Case (1): $a_1 = 1$ fixed, $\ar$ sampled with $\tau = -20$.} 

We first consider the case where $a_1 = 1$ remains fixed and the parameter $\ar$ is sampled from the MaxEnt probability distribution established in Proposition~\ref{prop1}, with $\underline{\ar} = 2^{1/6}$ (corresponding to the lattice constant $\ell = 1$ when $R_{\ast} = 1$) and $\tau = -20$. The sample is $\{\ar^{(i)}\}_{i=1}^S$ where $S=1000$. The probability density function (pdf) from which the sample was drawn, and the histogram of the sample are presented in Figure~\ref{fig:a2_hist}. In this figure, we also present the quantities of interest from \eqref{QoI} for $R_{\ast} = 1,\,\sqrt{3},\,2$, that is, the pdf and the histogram of $\{\widetilde{K}^{(i)}_{\rm cont}\}$ computed via Proposition~\ref{prop2}, and the histograms of $\{K_{\pm}^{(i)}\}$, with a pdf fitted according to \eqref{KpKm}. Table~\ref{tab:a2} complements the analysis by gathering the relevant data. In particular, we report that the relative strength of the lattice constant only varies with $R_*$ and is rather small, varying from just $0.05\%$ for $R_* = 1$ to $0.07\%$ for $R_*=2$. The data in Table~\ref{tab:a2} confirms that $\mathbb{E}(K_+)$ does not equal the deterministic $K_+$ computed for the mean value of $\ar$ (the same applies to $K_-$ and $\widetilde{K}_{\rm cont})$. We also report on the numerically computed values for $C_{R_*}^+$ and $C_{R_*}^-$ from \eqref{KpKm} and how they compare with $C_{R_*}$, which can be obtained analytically based on the proof of Proposition~\ref{prop2}. 

\begin{table}[htbp]
    \caption{Case (1), relevant data as $R_*$ varies: the relative strength of the lattice trapping measured as $1 - K_+/K_-$; the expected value of $K_+$; the deterministic value of $K_+$ and constants $C_{R_*},C^+_{R_*},C^-_{R_*}$ from Proposition~\ref{prop2} and \eqref{KpKm}.}
    \centering
    \begin{tabular}{c|c|c|c|c|c|c}
      $R_{\ast}$ & $1 - K_-/K_+$ & $\mathbb{E}(K_+)$ & $K_+$ at $\underline{\ar}$ & $C_{R_*}$ &  $C^-_{R_*}$  & $C_{R_*}^+$ \\
         \hline
     $1$ & 0.0005676 & 26.9643 & 26.6874 & 21.6864 & 22.4286 & 22.4414  \\
     $\sqrt{3}$ & 0.0006815 & 29.9944 & 29.6865 & 24.2825 & 24.9462 &  24.9632  \\
     $2$ & 0.0007081 & 31.3467 & 31.0249 & 25.4237 & 26.0702 & 26.0887\vspace{5pt}\\ 
     \end{tabular}     
    \label{tab:a2}
\end{table}

\subsubsection*{Case (2): $\ar = 2^{1/6}$ fixed, $a_1$ sampled with $\tau = \num{-20}$.} 

Next, we consider the case where $\ar = 2^{1/6}$ remains fixed and the parameter $a_1$ is sampled from the MaxEnt probability distribution established in Proposition~\ref{prop1}, with $\underline{a_1} = 1$ and $\tau = -20$. The sample is $\{a_1^{(i)}\}_{i=1}^S$ where $S=1000$. Figure~\ref{fig:a1_hist} and Table~\ref{tab:a1} summarize our findings for this case. We note that these results can be obtained very quickly, as the NCFlex scheme only has to be run once due to the following remark.
\begin{remark}
Assume that $\bm{U} = (\bm{u},\alpha,K)$ specifies an equilibrium configuration
\[
\bm{y}^{\bm{U}}(\bm{m}) = \bm{m} + K\widehat{\bm{u}}(\bm{m}-\bm{\alpha}) + \bm{u}(\bm{m})
\]
which solves \eqref{ncf-eqn} for some choice of the parameters $a_1$ and $\ar$ from the interatomic potential \eqref{LJ}. It follows from \eqref{u_CLE} that a multiplicative inverse of the shear modulus $\mu$ enters as a prefactor in $\widehat{\bm{u}}$, whereas from \eqref{mu-a1a2} it follows that the shear modulus $\mu$ depends on $a_1$ linearly. In a pointwise sense, the equilibrium $\bm{y}^{\bm{U}}$ satisfies, for each $\bm{m} \in \bm{\La}_R$,
\[
\sum_{\bm{\rho} \in \Rc} \Big[\phi'\big(|D_{\bm{\rho}}\bm{y}^{\bm{U}}(\bm{m}-\bm{\rho})|\big) - \phi'\big(|D_{\bm{\rho}}\bm{y}^{\bm{U}}(\bm{m})|\big) \Big] = 0
\] 
and since $a_1$ enters as a prefactor in $\phi'$, it readily follows that 
\begin{equation}\label{Utilde}
\bm{\widetilde{U}} = (\bm{u},\alpha,K\,a_1 / \widetilde{a_1})
\end{equation}
specifies an equilibrium configuration for the model in which the first parameter in the interatomic potential from \eqref{LJ} is set to $\widetilde{a}_1$. As a result, a snaking curve $s \mapsto \bm{U}_s$ obtained by running the NCFlex scheme for one value of $a_1$ gives rise to the corresponding snaking curve $s \mapsto \bm{\widetilde{U}}_s$ via the transformation in  \eqref{Utilde}.
\end{remark}
This observation implies that working with the 2D random variable $\bm{A}$ defined in \eqref{A-ran} is only as computationally costly as working with $\ar$, so we proceed to Cases 3 \& 4. 

\begin{table}[htbp]
    \caption{Case (2), relevant data as $R_*$ varies: the relative strength of the lattice trapping measured as $1 - K_+/K_-$; the expected value of $K_+$; the deterministic value of $K_+$ and constants $C_{R_*},C^+_{R_*},C^-_{R_*}$ from Proposition~\ref{prop2} and \eqref{KpKm}. The only differences compared to the data for Case 1.1 presented in Table~\ref{tab:a2} are highlighted in bold.}
    \centering
    \begin{tabular}{c|c|c|c|c|c|c}
      $R_{\ast}$ & $1 - K_-/K_+$ & $\mathbb{E}(K_+)$ & $K_+$ at $\underline{a_1}$ & $C_{R_*}$ &  $C^-_{R_*}$  & $C_{R_*}^+$ \\
         \hline
     $1$ & 0.0005676 & \textbf{26.8146} & 26.6874 & 21.6864 &  22.4286 & 22.4414  \\
     $\sqrt{3}$ & 0.0006815 & \textbf{29.8280} & 29.6865 & 24.2825 & 24.9462 &  24.9632  \\
     $2$ & 0.0007081 & \textbf{31.1727} & 31.0249 & 25.4237 & 26.0702 & 26.0887\vspace{5pt}\\ 
     \end{tabular}     
    \label{tab:a1}
\end{table}

\subsubsection*{Case (3): $a_1$ and $\ar$ sampled with $\tau = -20$.}

We now consider the case where both $a_1$ and $\ar$ are sampled simultaneously from the MaxEnt probability distribution established in Proposition~\ref{prop1}, with $\underline{a_1} = 1$, $\underline{\ar} = 2^{1/6}$ (corresponding to the lattice constant $\ell = 1$ when $R_{\ast} = 1$) and $\tau = -20$. In particular, the sample is $\{(\tilde{a}_1^{(i)},\tilde{a}_2^{(i)})\}_{i=1}^{S}$ where $S=1000$. We present the resulting data in the form a scatter matrix plot to emphasize the bivariate dependence between the random variables involved. This is shown in Figure~\ref{fig:scatmat_N1} for the case when $R_* = 1$. The perfect linear dependence between $\widetilde{K}_{\rm cont}$ and $K_+$ provides further numerical evidence that, in fact, \eqref{KpKm} holds true, rendering the ratio $\widetilde{K}_{\rm cont} / K_+$ a function of $R_*$ only (for a fixed $R$). This again strongly hints at the veracity of \eqref{Kc-Kat}. We further see the statistical independence of $a_1$ and $\ar$ (by design) and the qualitatively different dependence of the quantities of interest on $a_1$ and $\ar$. 

\subsubsection*{Case (4): combining samples of $a_1$ and $\ar$ when $\tau = -20$.}

In this case, we take the samples $\{a_1^{(i)}\}$ from Case 2 and $\{\ar^{(i)}\}$ from Case 1 into a combined $\num{1000000}$ sample $\{(a_1^{(i)},\ar^{(j)})\}_{i,j=1}^{S}$, where, as before, $S=1000$. This is made easy by the observation in Case 2, which implies that the NCFlex scheme only has to be run a 1000 times and not a \num{1000000} times. In particular, our focus is on the probability of a crack propagating or not propagating. Due to the phenomenon of lattice trapping, one can distinguish three possibilities: 
\begin{enumerate}[label=({\Alph*})]
\item  if $K < K_-$ then the crack will  definitely not propagate;
\item  if $K_- \leq K < K_+$ (in other words, $K \in I$) then the crack remains lattice-trapped; 
\item if $K \geq K_+$ then the crack will definitely propagate. 
\end{enumerate}
In the lattice-trapped case thermal fluctuations typically present at temperature above the absolute zero imply there is a non-zero probability of the crack propagating. This is a highly non-trivial case, which we do not delve into, but note that such questions can be approached by combining our approach with the framework of transition state theory \cite{HTB90}.  The key quantity here is the energy barrier at different values of $K$ within the lattice trapping range, which can be achieved with the NCFlex scheme. 
In our stochastic framework, (A) can be restated as $\mathsf{P}(K<K_-)$, (B) as $\mathsf{P}(K \in I$ and (C) as $\mathsf{P}(K \geq K_+)$. At $\tau = -20$, case (B) is negligible, hence we omit it from plots and only show (A) and (C), both obtained analytically and from the data in Figure~\ref{fig:prob_case4}.

\subsubsection*{Case (5): as in Case (1) but with $\tau = \num{-4000000}$}

In the final case, we revisit the setup from Case (1), but adjust the statistical fluctuations parameter to $\tau = \num{-4000000}$. In this case, the support of the probability density function is heavily concentrated around the mean, to the point where the strength of the lattice trapping is comparable with statistical fluctuations. This implies that case (B) discussed in Case (4) ceases to be negligible. As seen from Figure~\ref{fig:case5}, at this level of statistical fluctuations, there is a significant shift between the pdfs of $K_-$ and $K_+$. As a result, for the values of $K$ within the lattice trapping range, $\mathsf{P}(K<K_-)$ and $\mathsf{P}(K\geq K_+)$ are not complementary, in the sense that  they do not add up to approximately $1$, as can be seen by the considerable probability of $\mathsf{P}(K \in I)$ in-between the mean values of $K_-$ and $K_+$. This confirms that, in this the case, the strength of the lattice trapping begins to dominate over the strength of statistical fluctuations. This effect can be far more pronounced already at more reasonable values of $\tau$ in other models where the lattice trapping range is not as small as in our case.

This concludes our numerical investigation, in which we explored an implementation of the stochastic framework introduced in Section~\ref{sec:stoch-LJ}.

\section{Conclusion}

We have introduced an information-theoretic stochastic framework for studying atomistic crack propagation in the analytically-tractable case of the so-called theoretical Lennard-Jonesium 2D solid with the ground state of a triangular lattice and undergoing a pure Mode I fracture. In particular, we invoked the Maximum Entropy Principle to argue that, when little information is available, except for the mean values of the parameters, the parameters in the Lennard-Jones potential should be modeled as independent, Gamma-distributed random variables. Due to the relative simplicity of the model, we were able to infer how the uncertainty in the choice of these parameters propagate to quantities of interest, which in the case of atomistic fracture is the range of lattice trapping and the value of the critical stress intensity factor. This was followed by an extensive numerical study of stochastic atomistic fracture, made possible by an automated formulation of the NCFlex scheme from \cite{BK2021}, which, in particular, highlighted the limitations of a purely deterministic approach. In future work, we aim to develop a more general information-theoretic approach to uncertainty quantification in atomistic material modeling,  and further explore the stochastic effects within the lattice trapping range. \\

\paragraph{Acknowledgement.} The support by the Engineering and Physical Sciences Research Council of Great Britain under research grant EP/S028870/1 to Maciej Buze and L. Angela Mihai is gratefully acknowledged.



\appendix

\section{Determining the lattice constant, the shear modulus and the surface energy}\label{appendA}

In this appendix, we present calculations confirming the veracity of the formulae given by \eqref{lat-const-dep} and \eqref{mu-a1a2}. Such calculations are well known in the literature, but worth elaborating upon since they are central to our stochastic framework. We start with the formally defined energy
\[
\mathcal{E}(\bm{U}) = \sum_{\bm{m} \in \bm{\La}_R} V(\bm{D}\bm{y}^{\bm{U}}(\bm{m})),
\]
where $\bm{\La}_R = \bm{\La} \cap B_R$. We recall that $\bm{U}$ is the displacement and $\bm{y}^{\bm{U}}(\bm{m}) = \bm{m} + \bm{U}(\bm{m})$ is the deformation. A formal Taylor expansion of this energy around $\bm{y}^{\bm{0}}$ to second order yields
\[
\E(\bm{y}^{\bm{U}}) = \E(\bm{y}^{\bm{0}}) + \<{\delta \E(\bm{y}^{\bm{0}})}{\bm{U}} + \<{\delta^2 \E(\bm{y}^{\bm{0}})\bm{U}}{\bm{U}} + {\rm h.o.t.},
\]
where
\[
\<{\delta \E(\bm{y}^{\bm{0}})}{\bm{U}} = \sum_{\bm{m} \in \bm{\La}_R} \nabla V((\bm{\rho})) : \bm{D}\bm{U}(\bm{m}) = \sum_{\bm{m} \in \bm{\La}_R} \sum_{i, \bm{\rho}} \partial_{i\bm{\rho}}V((\bm{\rho}))D_{\bm{\rho}}U_i(\bm{m}).
\]
and 
\begin{align*}
\<{\delta^2 \E(\bm{y}^{\bm{0}})\bm{U}}{\bm{U}} &= \sum_{\bm{m} \in \bm{\La}_R} \nabla^2V((\bm{\rho}))\bm{D}\bm{U}(\bm{m}) : \bm{D}\bm{U}(\bm{m})\\
 &= \sum_{\bm{m} \in \bm{\La}_R} \sum_{i,\bm{\rho},j,\bm{\sigma}} \partial^2_{i\bm{\rho} j \bm{\sigma}}V((\bm{\rho}))D_{\bm{\rho}}U_i(\bm{m})D_{\bm{\sigma}}U_j(\bm{m}).
\end{align*}
For a uniform displacement $\bm{U}$, of the form $\bm{U}(\bm{x}) = \bm{F}\bm{x}$, for some suitable $\bm{F} \in \R^{2\times 2}$, we have $D_{\bm{\rho}}\bm{U}(\bm{x}) = \nabla \bm{U}(\bm{x}) \bm{\rho}$. This implies that for uniform displacements
\begin{equation}\label{deltaE0}
\<{\delta\E(\bm{y}^{\bm{0}})}{\bm{U}} = \sum_{\bm{m} \in \bm{\La}_R} \sum_{i,\alpha=1}^2 L_{i\alpha}\partial_{\alpha} U_i,
\end{equation}
where, due to the form of the potential, we have 
\begin{equation}\label{Lij}
L_{i\alpha} = \sum_{\bm{\rho}\in{\Rc}} \frac{\phi'(|\bm{\rho}|)}{|\bm{\rho}|}\rho_i\rho_{\alpha}.
\end{equation}
It is natural to assume that the potential in place admits the perfect lattice as an equilibrium configuration, and for that to be the case, it is necessary that $\<{\delta\E(\bm{y}^{\bm{0}})}{\bm{U}} = \bm{0}$, for any uniform displacement $\bm{U}$. It follows that the potential parameters $a_1$ and $\ar$ in \eqref{LJ}, and the lattice constant $l$ have to be chosen so that
\begin{equation}\label{cond1}
\sum_{i,\alpha=1}^2 L_{i\alpha}\partial_{\alpha} U_i = 0,
\end{equation}
for any $\bm{U}$. A direct calculation reveals that
\[
L_{i\alpha} = 24a_1\ar^{-6} \sum_{\bm{\rho} \in \Rc} \rho_i\rho_{\alpha}|\bm{\rho}|^{-14}\left(|\bm{\rho}|^6 - 2\ar^{-6}\right).
\]
Due to the lattice symmetries in the interaction range ${\Rc}$, it is immediate that, for $i,\alpha \in \{1,2\}$, $L_{i\alpha} = \hat{L}\delta_{i\alpha}$, where $\delta_{ij}$ denotes the Kronecker delta and
\[
\hat{L}=24\frac{a_1 \ar^{-6}}{\ell^{12}}\left(A_{R_*} \ell^6 - B_{R_*} \ar^{-6}\right).
\]
The constants depending on $R_*$ are 
\[
A_{R_*} = \sum_{\bm{\hat{\rho}} \in \hat{\Rc}}\hat{\rho}_1^2|\bm{\hat{\rho}}|^{-8}\,\quad B_{R_*} = \sum_{\bm{\hat{\rho}} \in \hat{\Rc}}2\hat{\rho}_1^2|\bm{\hat{\rho}}|^{-14},
\]
where $\hat{\Rc} = \Rc / \ell$ (i.e., with lattice constant normalized to unity). It follows that the lattice constant $\ell$ is a function of $R_*$ and $a_2$, since 
\begin{equation}\label{LJsigma-a-rel}
A_{i\alpha} = 0\quad\forall\, i,\alpha \implies \ell = \left(\frac{B_{R_*}}{A_{R_*}}\right)^{1/6} \ar^{-1}.
\end{equation}
A similar line of reasoning can be used to establish \eqref{mu-a1a2}. The lattice symmetries present in ${\Rc}$ imply that the only non-zero entries of the associated elasticity tensor $\C$ from \eqref{C-pp} are $\C_{iiii}$ ($i=1,2$) and $\C_{iijj} = \C_{ijij}$ ($i=1,2$ and $j=1,2$, $j\neq i$), and, in fact,
\begin{align*}
\C_{iiii} &= \frac{1}{{\rm det}(\ell\bm{M})}a_1 \sum_{\bm{\hat{\rho}} \in \hat{\Rc}}\hat{\rho}_1^4\left((B_1+B_3)|\bm{\hat{\rho}}|^{-16} - (B_2+B_4)|\bm{\hat{\rho}}|^{-10}\right),\\
\C_{iijj} &= \frac{1}{{\rm det}(\ell\bm{M})}a_1 \sum_{\bm{\hat{\rho}} \in \hat{\Rc}}\hat{\rho}_1^2\hat{\rho}_2^2\left((B_1+B_3)|\bm{\hat{\rho}}|^{-16} - (B_2+B_4)|\bm{\hat{\rho}}|^{-10}\right),
\end{align*}
for known constants $B_1,\dots,B_4$ depending only on $R_*$. As a result, we have the shear modulus given by
\begin{equation}\label{mu2}
\mu = \frac{1}{3}\C_{1111} = \C_{1122} = \C_{1212} = D_{R_*} a_1 \ar^{2},
\end{equation}
where the dependence on $\ar$ enters through \eqref{LJsigma-a-rel}. Finally, we also show the surface energy computation that confirms \eqref{gamma-gen}. Let $d_n$ denote the distance to the $n$th neighbor in the triangular lattice, with lattice constant equal to unity, and let $N_*$ be the unique value such that 
\[
d_{N_*} \leq R_*\;\text{ but }\;d_{N_* +1} > R_*.
\]
For instance, if $R_* = 2$ then $N_* = 3$, since $d_1 = 1$, $d_2 = \sqrt{3}$ and $d_3 = 2$. If the crack surface is extended by $L$, then $m_n L /\ell$ bonds of length $d_n$ additionally cross from one side of the crack to the other. For instance, $m_1 = 2$ (two nearest-neighbor bonds cross the surface in the triangular lattice if we extend the surface by one lattice spacing) and $m_2 = 4$. Importantly, $m_n$ is a fixed constant for each $n$. The energetic cost of breaking each such bond is given by $-\phi(d_n \ell)$. In the light of the above, a general formula for the surface energy $\gamma$ can be stated as
\[
\gamma = \frac{1}{L} \frac{L}{\ell} \left( \sum_{n = 1}^{N_*} - m_n \phi(d_n \ell)\right).
\]
For a general $R_*$, the relationship between $\ell$ and $\ar$ established in \eqref{lat-const-dep} implies that, for any scalar $\alpha$, we have 
\[
\phi(\alpha \ell) = 4a_1 \ar^{-6}\alpha^{-6} \frac{A_{R_*}}{B_{R_*}}\ar^6\left(\ar^{-6}\alpha^{-6} \frac{A_{R_*}}{B_{R_*}}\ar^6 - 1\right) =: C a_1,
\]
where the constant $C$ depends only on $\alpha$ and $R_*$, as the terms involving $\ar$ cancel one another out. Putting it all together, it is only a matter of gathering all the different constants depending only on $R_*$ to conclude that 
\[
\gamma = \tilde{C}_{R_*} a_1 \ar,
\]
where the constant $\tilde{C}_{R_*}$ only depends on $R_*$.

\bibliographystyle{amsplain}


\newcommand{\noop}[1]{}
\providecommand{\bysame}{\leavevmode\hbox to3em{\hrulefill}\thinspace}
\providecommand{\MR}{\relax\ifhmode\unskip\space\fi MR }
\providecommand{\MRhref}[2]{%
  \href{http://www.ams.org/mathscinet-getitem?mr=#1}{#2}
}
\providecommand{\href}[2]{#2}
\begin{thebibliography}{10}

\bibitem{Bitzek2015}
E.~Bitzek, J.~R. Kermode, and P.~Gumbsch, \emph{Atomistic aspects of fracture},
  International Journal of Fracture \textbf{191} (2015), no.~1, 13--30.

\bibitem{2017-bcscrew}
J.~Braun, M.~Buze, and C.~Ortner, \emph{The effect of crystal symmetries on the
  locality of screw dislocation cores}, SIAM J. Math. Anal. \textbf{51} (2019),
  no.~2, 1108--1136.


\bibitem{2018-antiplanecrack}
M.~Buze, T.~Hudson, and C.~Ortner, \emph{Analysis of an atomistic model for
  anti-plane fracture}, Mathematical Models and Methods in Applied Sciences
  \textbf{29} (2019), no.~13, 2469--2521.

\bibitem{2019-antiplanecrack}
\bysame, \emph{Analysis of cell size effects in atomistic crack propagation},
  {ESAIM}: Mathematical Modelling and Numerical Analysis \textbf{54} (2020),
  no.~6, 1821--1847.

\bibitem{BK2021}
M.~Buze and J.R. Kermode, \emph{Numerical-continuation-enhanced flexible
  boundary condition scheme applied to mode-i and mode-iii fracture}, Phys.
  Rev. E \textbf{103} (2021), 033002.

\bibitem{curtin_1990}
W.~A. Curtin, \emph{On lattice trapping of cracks}, Journal of Materials
  Research \textbf{5} (1990), no.~7, 1549–1560.

\bibitem{emingstatic}
W. E and P. Ming, \emph{{Cauchy}-{Born} rule and the stability of
  crystalline solids: Static problems}, Arch.\ Ration. Mech.\ Anal.
  \textbf{183} (2007), 241--297.

\bibitem{Ericksen08}
J.~L. Ericksen, \emph{On the {Cauchy}-{Born} rule}, Mathematics and Mechanics
  of Solids \textbf{13} (2008), no.~3-4, 199--220.

\bibitem{E84}
J.L Ericksen, \emph{The {Cauchy} and {Born} hypotheses for crystals}, Phase
  Transformations and Material Instabilities in Solids (ed. M.E. Gurtin)
  (1984), 61--77.

\bibitem{Frederiksen2004}
S.~L. Frederiksen, K.~W. Jacobsen, K.~S. Brown, and J.~P.
  Sethna, \emph{Bayesian ensemble approach to error estimation of interatomic
  potentials}, Phys. Rev. Lett. \textbf{93} (2004), 165501.

\bibitem{friesecketheil}
G.~Friesecke and F.~Theil, \emph{Validity and failure of the {Cauchy}-{Born}
  hypothesis in a two-dimensional mass-spring lattice}, J.\ Nonlinear Sci.
  \textbf{12} (2002), 445--478.

\bibitem{G2020}
J. Guilleminot, \emph{Modeling non-gaussian random fields of material
  properties in multiscale mechanics of materials}, Uncertainty Quantification
  in Multiscale Materials Modeling (Yan Wang and David~L. McDowell, eds.),
  Elsevier Series in Mechanics of Advanced Materials, Woodhead Publishing,
  2020, pp.~385--420.

\bibitem{guill_soize1}
J. Guilleminot and C. Soize, \emph{{On the statistical dependence
  for the components of random elasticity tensors exhibiting material symmetry
  properties}}, {Journal of Elasticity} \textbf{111} (2013), no.~2, 109--130.

\bibitem{GS2020}
J. Guilleminot and C. Soize, \emph{Non-gaussian random fields in
  multiscale mechanics of heterogeneous materials}, Encyclopedia of Continuum
  Mechanics (Holm Altenbach and Andreas {\"O}chsner, eds.), Springer Berlin
  Heidelberg, Berlin, Heidelberg, 2020, pp.~1826--1834.

\bibitem{HTB90}
P.~H\"anggi, P.~Talkner, and M.~Borkovec, \emph{Reaction-rate theory: fifty
  years after {Kramers}}, Rev. Mod. Phys. \textbf{62} (1990), 251--341.


\bibitem{Jaynes1957}
E.~T. Jaynes, \emph{Information theory and statistical mechanics}, Phys. Rev.
  \textbf{106} (1957), 620--630.

\bibitem{Kermode2008}
J~R Kermode, T~Albaret, Dov Sherman, Noam Bernstein, P~Gumbsch, M~C Payne,
  G{\'a}bor Cs{\'a}nyi, and A~De~Vita, \emph{Low-speed fracture instabilities
  in a brittle crystal}, Nature \textbf{455} (2008), no.~7217, 1224--1227.

\bibitem{Kittel1996}
C.~Kittel, P.~McEuen, and P.~McEuen, \emph{Introduction to solid state
  physics}, vol.~8, Wiley New York, 1996.

\bibitem{landau1989theory}
L.D. Landau, E.M. Lifshitz, and J.B. Sykes, \emph{Theory of elasticity}, Course
  of theoretical physics, Pergamon Press, 1989.

\bibitem{LJ1924}
J.~E. Lennard-Jones, \emph{On the determination of molecular fields.
  {III}.{\textemdash}from crystal measurements and kinetic theory data},
  Proceedings of the Royal Society of London. Series A, Containing Papers of a
  Mathematical and Physical Character \textbf{106} (1924), no.~740, 709--718.

\bibitem{Longbottom2019}
S.~Longbottom and P.~Brommer, \emph{Uncertainty quantification for
  classical effective potentials: an extension to potfit}, Modelling and
  Simulation in Materials Science and Engineering \textbf{27} (2019), no.~4,
  044001.

\bibitem{Mehrabadi:1990:MC} 
M.~M.~Mehrabadi and S.~C.~ Cowin SC, \emph{Eigentensors of linear anisotropic elastic materials}, The Quarterly Journal of Mechanics and Applied Mathematics \textbf{43}(1990), 15--41 (doi: 10.1093/qjmam/43.1.15).

\bibitem{Mehta2004}
M.~L. Mehta, \emph{Random matrices}, Elsevier, 2004.

\bibitem{ortnertheil13}
C.~Ortner and F.~Theil, \emph{Justification of the {Cauchy}-{Born}
  approximation of elastodynamics}, Arch.\ Ration. Mech.\ Anal. \textbf{207}
  (2013), 1025--1073.

\bibitem{Ostoja2002}
M.~Ostoja-Starzewski, \emph{Lattice models in micromechanics}, Appl. Mech.
  Rev. \textbf{55} (2002), no.~1, 35--60.

\bibitem{Sinclair}
 J.~E. Sinclair \emph{The influence of the interatomic force law and of kinks on the propagation of brittle cracks}, The Philosophical Magazine: A Journal of Theoretical Experimental and Applied Physics, \textbf{31} (1975), 647-671.

\bibitem{Soize2006}
C.~Soize, \emph{Non-gaussian positive-definite matrix-valued random fields for
  elliptic stochastic partial differential operators}, Computer Methods in
  Applied Mechanics and Engineering \textbf{195} (2006), no.~1, 26--64.

\bibitem{Soize2017}
Christian Soize, \emph{Uncertainty quantification: An accelerated course with
  advanced applications in computational engineering}, vol.~47, Springer, 2017.

\bibitem{SJ12}
C.T. Sun and Z.-H. Jin, \emph{Fracture mechanics}, Academic Press, 2012.

\bibitem{Thomson_1971}
R.~Thomson, C.~Hsieh, and V.~Rana, \emph{Lattice trapping of fracture cracks},
  Journal of Applied Physics \textbf{42} (1971), no.~8, 3154--3160.

\bibitem{Thorpe1992}
M.~F. Thorpe and I.~Jasiuk, \emph{New results in the theory of elasticity for
  two-dimensional composites}, Proceedings: Mathematical and Physical Sciences
  \textbf{438} (1992), no.~1904, 531--544.
  
  \bibitem{BifKit}
Romain Veltz, \emph{{BifurcationKit.jl}},
  \url{https://hal.archives-ouvertes.fr/hal-02902346}, July 2020.

\bibitem{Wen2020}
Mingjian Wen and Ellad~B Tadmor, \emph{Uncertainty quantification in molecular
  simulations with dropout neural network potentials}, npj Computational
  Materials \textbf{6} (2020), no.~1, 1--10.

\bibitem{Zehnder2012}
Alan~T. Zehnder, \emph{Fracture mechanics}, Springer Netherlands, 2012.

\end{thebibliography}

  \newcommand{\noop}[1]{}
\providecommand{\bysame}{\leavevmode\hbox to3em{\hrulefill}\thinspace}
\providecommand{\MR}{\relax\ifhmode\unskip\space\fi MR }
\providecommand{\MRhref}[2]{%
  \href{http://www.ams.org/mathscinet-getitem?mr=#1}{#2}
}
\providecommand{\href}[2]{#2}

\end{document}